\newtheorem{theorem}{Theorem}[section]
\newtheorem{prop}[theorem]{Proposition}
\newtheorem{lemma}[theorem]{Lemma}
\newtheorem{prop-def}{Proposition-Definition}[section]
\newtheorem{coro-def}{Corollary-Definition}[section]
\newcommand{\nc}{\newcommand}
\nc{\tred}[1]{\textcolor{red}{#1}}
\nc{\tblue}[1]{\textcolor{blue}{#1}}
\nc{\tgreen}[1]{\textcolor{green}{#1}}
\nc{\tpurple}[1]{\textcolor{purple}{#1}}
\nc{\btred}[1]{\textcolor{red}{\bf #1}}
\nc{\btblue}[1]{\textcolor{blue}{\bf #1}}
\nc{\btgreen}[1]{\textcolor{green}{\bf #1}}
\nc{\btpurple}[1]{\textcolor{purple}{\bf #1}}
\renewcommand{\frak}{\mathfrak}
\newcommand{\efootnote}[1]{}
\renewcommand{\textbf}[1]{}
\newcommand{\delete}[1]{}
\nc{\dfootnote}[1]{{}}          
\nc{\ffootnote}[1]{\dfootnote{#1}}
\nc{\mfootnote}[1]{\footnote{#1}} 
\nc{\ofootnote}[1]{\footnote{\tiny Older version: #1}}
\nc{\mlabel}[1]{\label{#1}}  
\nc{\mcite}[1]{\cite{#1}}  
\nc{\mref}[1]{\ref{#1}}  
\nc{\mlabel}[1]{\label{#1}  
{\hfill \hspace{1cm}{\bf{{\ }\hfill(#1)}}}}
\nc{\mcite}[1]{\cite{#1}{{\bf{{\ }(#1)}}}}  
\nc{\mref}[1]{\ref{#1}{{\bf{{\ }(#1)}}}}  
\nc{\mbibitem}[1]{\bibitem[\bf #1]{#1}} 
\nc{\mkeep}[1]{\marginpar{{\bf #1}}} 
\nc{\opa}{\ast} \nc{\opb}{\odot} \nc{\op}{\bullet} \nc{\pa}{\frakL}
\nc{\arr}{\rightarrow} \nc{\lu}[1]{(#1)} \nc{\mult}{\mrm{mult}}
\nc{\diff}{\mathfrak{Diff}}
\nc{\opc}{\sharp}\nc{\opd}{\natural}
\nc{\ope}{\circ}
\nc{\AW}{{\bf AW}}
\nc{\bin}[2]{ (_{\stackrel{\scs{#1}}{\scs{#2}}})}  
\nc{\binc}[2]{ \left (\!\! \begin{array}{c} \scs{#1}\\
    \scs{#2} \end{array}\!\! \right )}  
\nc{\bincc}[2]{  \left ( {\scs{#1} \atop
    \vspace{-1cm}\scs{#2}} \right )}  
\nc{\bs}{\bar{S}} \nc{\cosum}{\sqsubset} \nc{\la}{\longrightarrow}
\nc{\rar}{\rightarrow} \nc{\dar}{\downarrow} \nc{\dprod}{**}
\nc{\dap}[1]{\downarrow \rlap{$\scriptstyle{#1}$}}
\nc{\md}{\mathrm{dth}} \nc{\uap}[1]{\uparrow
\rlap{$\scriptstyle{#1}$}} \nc{\defeq}{\stackrel{\rm def}{=}}
\nc{\disp}[1]{\displaystyle{#1}} \nc{\dotcup}{\
\displaystyle{\bigcup^\bullet}\ } \nc{\gzeta}{\bar{\zeta}}
\nc{\hcm}{\ \hat{,}\ } \nc{\hts}{\hat{\otimes}}
\nc{\barot}{{\otimes}} \nc{\free}[1]{\bar{#1}}
\nc{\uni}[1]{\tilde{#1}} \nc{\hcirc}{\hat{\circ}} \nc{\lleft}{[}
\nc{\lright}{]} \nc{\lc}{\lfloor} \nc{\rc}{\rfloor}
\nc{\curlyl}{\left \{ \begin{array}{c} {} \\ {} \end{array}
    \right .  \!\!\!\!\!\!\!}
\nc{\curlyr}{ \!\!\!\!\!\!\!
    \left . \begin{array}{c} {} \\ {} \end{array}
    \right \} }
\nc{\longmid}{\left | \begin{array}{c} {} \\ {} \end{array}
    \right . \!\!\!\!\!\!\!}
\nc{\onetree}{\bullet} \nc{\ora}[1]{\stackrel{#1}{\rar}}
\nc{\ola}[1]{\stackrel{#1}{\la}}
\nc{\ot}{\otimes} \nc{\mot}{{{\boxtimes\,}}}
\nc{\otm}{\overline{\boxtimes}} \nc{\sprod}{\bullet}
\nc{\scs}[1]{\scriptstyle{#1}} \nc{\mrm}[1]{{\rm #1}}
\nc{\margin}[1]{\marginpar{\rm #1}}   
\nc{\dirlim}{\displaystyle{\lim_{\longrightarrow}}\,}
\nc{\invlim}{\displaystyle{\lim_{\longleftarrow}}\,}
\nc{\mvp}{\vspace{0.3cm}} \nc{\tk}{^{(k)}} \nc{\tp}{^\prime}
\nc{\ttp}{^{\prime\prime}} \nc{\svp}{\vspace{2cm}}
\nc{\vp}{\vspace{8cm}} \nc{\proofbegin}{\noindent{\bf Proof: }}
\nc{\proofend}{$\blacksquare$ \vspace{0.3cm}}
\nc{\modg}[1]{\!<\!\!{#1}\!\!>}
\nc{\intg}[1]{F_C(#1)} \nc{\lmodg}{\!
<\!\!} \nc{\rmodg}{\!\!>\!}
\nc{\cpi}{\widehat{\Pi}}
\nc{\sha}{{\mbox{\cyr X}}}  
\nc{\shap}{{\mbox{\cyrs X}}} 
\nc{\shpr}{\diamond}    
\nc{\shp}{\ast} \nc{\shplus}{\shpr^+}
\nc{\shprc}{\shpr_c}    
\nc{\msh}{\ast} \nc{\zprod}{m_0} \nc{\oprod}{m_1}
\nc{\vep}{\varepsilon} \nc{\labs}{\mid\!} \nc{\rabs}{\!\mid}
\nc{\speciall}{\mathrm{sl(2,\mathbb{C})}}
\nc{\mmbox}[1]{\mbox{\ #1\ }} \nc{\fp}{\mrm{FP}}
\nc{\rchar}{\mrm{char}} \nc{\End}{\mrm{End}} \nc{\Fil}{\mrm{Fil}}
\nc{\Mor}{Mor\xspace} \nc{\gmzvs}{gMZV\xspace}
\nc{\gmzv}{gMZV\xspace} \nc{\mzv}{MZV\xspace}
\nc{\mzvs}{MZVs\xspace} \nc{\Hom}{\mrm{Hom}} \nc{\id}{\mrm{id}}
\nc{\im}{\mrm{im}} \nc{\incl}{\mrm{incl}} \nc{\map}{\mrm{Map}}
\nc{\mchar}{\rm char} \nc{\nz}{\rm NZ} \nc{\supp}{\mathrm Supp}
\nc{\Alg}{\mathbf{Alg}} \nc{\Bax}{\mathbf{Bax}} \nc{\bff}{\mathbf f}
\nc{\bfk}{{\bf k}} \nc{\bfone}{{\bf 1}} \nc{\bfx}{\mathbf x}
\nc{\bfy}{\mathbf y}
\nc{\base}[1]{\bfone^{\otimes ({#1}+1)}} 
\nc{\Cat}{\mathbf{Cat}}
\nc{\detail}{\marginpar{\bf More detail}
    \noindent{\bf Need more detail!}
    \svp}
\nc{\Int}{\mathbf{Int}} \nc{\Mon}{\mathbf{Mon}}
\nc{\rbtm}{{shuffle }} \nc{\rbto}{{Rota-Baxter }}
\nc{\remarks}{\noindent{\bf Remarks: }} \nc{\Rings}{\mathbf{Rings}}
\nc{\Sets}{\mathbf{Sets}} \nc{\wtot}{\widetilde{\odot}}
\nc{\wast}{\widetilde{\ast}} \nc{\bodot}{\bar{\odot}}
\nc{\bast}{\bar{\ast}} \nc{\hodot}[1]{\odot^{#1}}
\nc{\hast}[1]{\ast^{#1}} \nc{\mal}{\mathcal{O}}
\nc{\tet}{\tilde{\ast}} \nc{\teot}{\tilde{\odot}}
\nc{\oex}{\overline{x}} \nc{\oey}{\overline{y}}
\nc{\oez}{\overline{z}} \nc{\oef}{\overline{f}}
\nc{\oea}{\overline{a}} \nc{\oeb}{\overline{b}}
\nc{\weast}[1]{\widetilde{\ast}^{#1}}
\nc{\weodot}[1]{\widetilde{\odot}^{#1}} \nc{\hstar}[1]{\star^{#1}}
\nc{\lae}{\langle} \nc{\rae}{\rangle}
\nc{\lf}{\lfloor}\nc{\rf}{\rfloor}
\newcommand{\threech}[3]{\par\begin{tabular}{*{3}{@{}p{5.2cm}}}(1)~#1
& (2)~#2 & (3)~#3\end{tabular}}
\newcommand{\twoch}[2]{\par\begin{tabular}{*{2}{@{}p{7cm}}}(4)~#1
& (5)~#2 \end{tabular}}
\newcommand{\onechss}[1]{\par\begin{tabular}{*{1}{@{}p{14cm}}}(6)~#1 \end{tabular}}
\newcommand{\onech}[1]{\par\begin{tabular}{*{1}{@{}p{14cm}}}(7)~#1 \end{tabular}}
\newcommand{\onechs}[1]{\par\begin{tabular}{*{1}{@{}p{14cm}}}(8)~#1 \end{tabular}}
\newcommand{\twochs}[2]{\par\begin{tabular}{*{2}{@{}p{7cm}}}(9)~#1
& (10)~#2 \end{tabular}}
\nc{\CC}{\mathbb{C}}   \nc{\RR}{\mathbf{RR}}
\nc{\cala}{{\mathcal A}} \nc{\calb}{{\mathcal B}}
\nc{\calc}{{\mathcal C}}
\nc{\cald}{{\mathcal D}} \nc{\cale}{{\mathcal E}}
\nc{\calf}{{\mathcal F}} \nc{\calg}{{\mathcal G}}
\nc{\calh}{{\mathcal H}} \nc{\cali}{{\mathcal I}}
\nc{\call}{{\mathcal L}} \nc{\calm}{{\mathcal M}}
\nc{\caln}{{\mathcal N}} \nc{\calo}{{\mathcal O}}
\nc{\calp}{{\mathcal P}} \nc{\calr}{{\mathcal R}}
\nc{\cals}{{\mathcal S}} \nc{\calt}{{\mathcal T}}
\nc{\calu}{{\mathcal U}} \nc{\calw}{{\mathcal W}} \nc{\calk}{{\mathcal K}}
\nc{\calx}{{\mathcal X}} \nc{\CA}{\mathcal{A}}
\nc{\fraka}{{\mathfrak a}} \nc{\frakA}{{\mathfrak A}}
\nc{\frakb}{{\mathfrak b}} \nc{\frakB}{{\mathfrak B}}
\nc{\frakD}{{\mathfrak D}} \nc{\frakg}{{\mathfrak g}}
\nc{\frakH}{{\mathfrak H}} \nc{\frakL}{{\mathfrak L}}
\nc{\frakM}{{\mathfrak M}} \nc{\bfrakM}{\overline{\frakM}}
\nc{\frakm}{{\mathfrak m}} \nc{\frakP}{{\mathfrak P}}
\nc{\frakN}{{\mathfrak N}} \nc{\frakp}{{\mathfrak p}}
\nc{\frakS}{{\mathfrak S}}
\font\cyr=wncyr10 \font\cyrs=wncyr7
\nc{\li}[1]{\textcolor{red}{Li:#1}}
\nc{\jun}[1]{\textcolor{blue}{Jun: #1}}
\nc{\cm}[1]{\textcolor{purple}{CM: #1}}
\begin{document}

\title{Rota-Baxter operators on $\speciall$ and solutions of the classical Yang-Baxter equation}

\author{Jun Pei}
\address{Department of Mathematics, Lanzhou University, Lanzhou, Gansu 730000, China}
         \email{peitsun@163.com}

\author{Chengming Bai}
\address{Chern Institute of Mathematics \& LPMC, Nankai University, Tianjin 300071, China}
         \email{baicm@nankai.edu.cn}

\author{Li Guo}
\address{Department of Mathematics and Computer Science,
         Rutgers University,
         Newark, NJ 07102 }
\email{liguo@rutgers.edu}

\date{\today}

\begin{abstract}
We explicitly determine all Rota-Baxter operators (of weight zero) on $\speciall$ under the Cartan-Weyl basis. For the skew-symmetric operators, we give the corresponding skew-symmetric solutions of the classical Yang-Baxter equation in $\speciall$, confirming the related study by Semenov-Tian-Shansky. In general, these Rota-Baxter operators give a family of solutions of the classical Yang-Baxter equation in the 6-dimensional Lie algebra $\speciall \ltimes_{{\rm ad}^{\ast}} \speciall^{\ast}$. They also give rise to 3-dimensional pre-Lie algebras which in turn yield solutions of the classical Yang-Baxter equation in other 6-dimensional Lie algebras.
\end{abstract}

\subjclass[2010]{16T25, 81R15}

\keywords{Rota-Baxter operator, classical Yang-Baxter equation, pre-Lie algebra}

\maketitle

\tableofcontents

\setcounter{section}{0}


\section{Introduction}

A {\bf Rota-Baxter operator (of weight zero) on an associative algebra $A$}\footnote{More generally, for any given scalar $\lambda$, a Rota-Baxter operator of weight $\lambda$ on $A$ is a linear map $P:A\rightarrow A$ satisfying
$P(x)P(y)=P(P(x)y+xP(y)+\lambda x y), \forall x,y\in A.$ We will focus on the weight zero case in this paper, both for associative algebras and for Lie algebras (see below). For the relationship between Rota-Baxter operators of nonzero weight on Lie algebras and classical Yang-Baxter equation, see~\cite{BGN1,BGN2} and the references therein.}
  is defined to be a linear map $P : A\rightarrow A$ satisfying
\begin{equation}\label{eq:RB}
P(x)P(y)=P(P(x)y+xP(y)),\forall x,y\in A.\end{equation}
Rota-Baxter operators (on associative algebras) were introduced by G. Baxter to solve an analytic formula in probability~\cite{Baxter}. In fact, the  relation ~(\ref{eq:RB})
generalizes the integration by parts formula. It was G.-C. Rota who
realized its importance in combinatorics~\cite{Rota}. It has been related to many other areas in
mathematics (see \cite{Guo1,Guo2} and the references therein).
It has also appeared in several fields in mathematical
physics. For example, Rota-Baxter operators were found to
play a crucial role in the Hopf algebraic approach of Connes and
Kreimer to the renormalization of perturbative quantum field
theory (\cite{CK1,CK2}).

Completely independent of the above developments, the  relation ~(\ref{eq:RB}) in the context of Lie algebras has
it own motivation and developing history. In fact, a {\bf Rota-Baxter operator on a Lie algebra $(\frakg, [\,,\,])$}, namely a linear operator $P:\frakg\to \frakg$ such that
\begin{equation}\label{eq:rotaeq1}
[P(x),P(y)] = P([P(x),y]+[x,P(y)]), \text{for all } x, y\in \frakg,
\end{equation}
is also called the operator form of the classical Yang-Baxter equation due to
Semenov-Tian-Shansky's work on the fundamental study of the later (\cite{STS}).
Explicitly, let ${\frak g}$ be a Lie algebra and $r=\sum\limits_i a_i\otimes
b_i\in {\frak g}\otimes {\frak g}$. Recall that $r$ is called a solution of the {\bf
classical Yang-Baxter equation (CYBE)} in ${\frak g}$ if
\begin{equation}\label{eq:CYBE}[r_{12},r_{13}]+[r_{12},r_{23}]+[r_{13},r_{23}]=0\;\;{\rm
in}\;U({\frak g}), \end{equation} where $U({\frak g})$ is the universal
enveloping algebra of ${\frak g}$ and
\begin{equation}r_{12}=\sum\limits_i a_i\otimes b_i\otimes 1, \quad r_{13}=\sum\limits_i
a_i\otimes1\otimes b_i, \quad r_{23}=\sum\limits_i 1\otimes a_i\otimes
b_i.\end{equation}

The CYBE first arose in the study of
inverse scattering theory (\cite{FT1,FT2}). It can be regarded as a ``classical limit" of the quantum
Yang-Baxter equation (\cite{Belavin}). They play a crucial role in many fields such as symplectic geometry, integrable systems, quantum groups,
quantum field theory~(see~\cite{CP} and the references therein).
When $\frak g$ is finite dimensional, $r\in {\frak g}\otimes {\frak g}$ corresponds to a linear map
(classical $r$-matrix) due to the expression of $r$ under a basis of $\frak g$. It is Semenov-Tian-Shansky who
proved that the relation ~(\ref{eq:rotaeq1}) is equivalent to the tensor form (\ref{eq:CYBE}) of the CYBE
when the following two conditions are satisfied: (a) there exists a
nondegenerate symmetric invariant bilinear form on ${\frak g}$ and (b) $r$ is skew-symmetric.

Semenov-Tian-Shansky systematically studied the relations (\ref{eq:rotaeq1}) and (\ref{eq:CYBE}) in \cite{STS}. He gave classification results of the operators satisfying (\ref{eq:rotaeq1}) on semisimple Lie algebras in terms of certain linear maps associated to some specified subalgebras. But the explicit construction has not been obtained yet.
Such explicit classification of Rota-Baxter operators and solutions of CYBE under a basis is necessary since many applications in the related fields depend strongly on the explicit expression, whereas the other types of classification (such as in terms of subalgebras) might not be applied as conveniently.

More generally, both the skew-symmetric and non-skew-symmetric solutions of the classical Yang-Baxter equation in the semisimple Lie algebras have been considered (\cite{BD,St1,St2}).
In the non-skew-symmetric case, the relation ~(\ref{eq:rotaeq1}) for a Lie algebra is not
equivalent to the tensor form (\ref{eq:CYBE}) of CYBE over the same Lie algebra, but nevertheless gives solutions of the tensor form~(\mref{eq:CYBE}) of the CYBE over other related Lie algebras, in at least two ways.

First, it is shown in~\mcite{Bai0} that a Rota-Baxter operator on a Lie algebra $\mathfrak g$ satisfying the relation~(\ref{eq:rotaeq1}) gives a solution of CYBE in the semidirect sum Lie algebra $\mathfrak{g} \ltimes_{{\rm ad}^{\ast}} \mathfrak{g}^{\ast}$ from the dual representation of the adjoint representation (co-adjoint representation) of $\mathfrak{g}$.
 In particular the classification result in our study gives a family of solutions in
the 6-dimensional Lie algebra $\speciall \ltimes_{{\rm ad}^{\ast}} \speciall^{\ast}$.

Second, by~\cite{Agu}, a Rota-Baxter operator on $\mathfrak g$ gives pre-Lie algebra structure $A=A_{\mathfrak{g}}$ on the same underlying space of $\mathfrak g$. Pre-Lie
algebras are a class of nonassociative algebras coming
from the study of convex homogeneous cones, affine manifolds and
deformations of associative algebras and appeared in many
fields in mathematics and mathematical physics (see the survey article \cite{Bu} and the references therein).
It can be regarded as the algebraic structure
behind both the Rota-Baxter operator and the classical Yang-Baxter equation in Lie algebras~\cite{Bai0}.

Furthermore, a pre-Lie algebra $A$, by anti-symmetrizing, gives a Lie algebra $\mathfrak{g}(A)$ on its underlying space that has a representation on itself by the left multiplication $L$ of the pre-Lie algebra. This representation gives a solution of CYBE on in $\frak g(A)\ltimes_{L^*} \frak{g}(A)^*$.
In summary, a Rota-Baxter operator on a Lie algebra gives rise to a second Lie algebra structure $\mathfrak{g}(A)$ on the same underlying space and a solution of CYBE in the semidirect sum Lie algebra $\frak g(A)\ltimes_{L^*} \frak{g}(A)^*$. Note that $\frak g(A)\ltimes_{L^*} \frak{g}(A)^*$ is different from $\mathfrak{g} \ltimes_{{\rm ad}^{\ast}} \mathfrak{g}^{\ast}$!

Therefore, in order to study CYBE~(\ref{eq:CYBE}), among other purposes, it is important to explicitly determine the Rota-Baxter operators over a semisimple Lie algebra under certain canonical basis like Cartan-Weyl basis.
Unfortunately, it is not easy to carry it out for arbitrary semisimple Lie algebras. Thus, in this paper, we will focus on Rota-Baxter operators on $\speciall$. It is the simplest semisimple Lie algebra, yet has typical properties that might be used as a guide for more general investigations.
\smallskip

This paper is organized as follows. In Section 2, we give the explicit classification of Rota-Baxter operator on $\speciall$ under the Cartan-Weyl basis, by first reducing the classification problem to solution of a system of quadratic equations and then solving this system. In Section 3, we first specialize to the skew-symmetric Rota-Baxter operator $\speciall$ and give the explicit correspondence with the skew-symmetric solutions of CYBE in $\speciall$, as expected by Semenov-Tian-Shansky~\cite{STS}. Then for all Rota-Baxter operators on $\speciall$, we derive the induced solutions of CYBE in the 6-dimensional Lie algebra $\speciall \ltimes_{ad^{\ast}} \speciall^{\ast}$. In Section 4, we give the induced 3-dimensional pre-Lie algebras $A$ from these Rota-Baxter operators and the resulting solutions of CYBE in the 6-dimensional Lie algebras $\frak g(A)\ltimes_{L^*} \frak{g}(A)^*$. In Section 5, we give some conclusions and discussions based on the results in the previous sections.

\section{The Rota-Baxter operators on $\speciall$}
We first give some background notations and the statement of the classification theorem of Rota-Baxter operators on $\speciall$ in Section~\ref{sec:2.1}. The proof of the theorem is carried out in two parts. In Section~\ref{sec:2.2} the proof is reduced solving a system of quadratic equations. In Section~\ref{sec:2.3} the system of the quadratic equations is solved.

\subsection{Notations and the classification theorem}\label{sec:2.1}

Let $\speciall$ be the 3-dimensional special linear Lie algebra over the field of complex numbers $\mathbb{C}$. Let
$$
e = \left(\begin{array}{cc} 0&1\\0&0 \end{array}\right), \quad f = \left(\begin{array}{cc} 0&0\\1&0 \end{array}\right),\quad h = \left(\begin{array}{cc} 1&0\\0&-1 \end{array}\right)
$$
be the standard (Cartan-Weyl) basis of $\speciall$. Then we have
\begin{equation}\label{eq:product}
[h,e] = 2e, \quad [h,f] = -2f, \quad [e,f] = h.
\end{equation}
Thus a linear operator $P:\speciall\to \speciall$ is determined by
\begin{equation}\mlabel{eq:matrix}
\left( \begin{array}{c}
 P(e)\\
P(f) \\
P(h)\end{array} \right)  = \left( \begin{array} {ccc}
r_{11}&r_{12}&r_{13}\\
r_{21}&r_{22}&r_{23}\\
r_{31}&r_{32}&r_{33} \end{array}\right)   \left( \begin{array}{c}
e\\
f \\
h\end{array} \right),
\end{equation}
where $r_{ij}\in \CC, 1\leq i, j\leq 3$.
$P$ is a Rota-Baxter operator on $\speciall$ if the above matrix
$(r_{ij})_{3 \times 3}$ satisfies Eq.~(\ref{eq:rotaeq1}) for $x, y \in\{e, f, h\}$.

It follows from a direct check that $P$ is a  Rota-Baxter operator
if and only if $\lambda P$ is a Rota-Baxter operator
for $0\neq\lambda\in \CC$.
Thus the set $RB(\speciall)$ of Rota-Baxter operators on $\speciall$ carries
an action of $\CC^*:=\mathbb C\backslash \{0\}$ by scalar multiplication.
To determine all the Rota-Baxter operators on $\speciall$,
we only need to give a complete set of representatives of $RB(\speciall)$ under this action.

\begin{theorem} A complete set of representatives of $RB(\speciall)$ under the action of $\CC^*$ by scalar product consists of the 22 Rota-Baxter operators whose linear transformation matrices with respect to the Cartan-Weyl basis are listed below, where $a, b$ are non-zero complex numbers:

\begin{eqnarray*}
P_{1} = \left( \begin{array} {ccc}
0&0&0\\
0&0&1\\
0&0&0\end{array}\right), \quad  P_{2} = \left( \begin{array} {ccc}
0&0&0\\
0&0&0\\
0&0&0\end{array}\right), \quad P_{3} = \left( \begin{array} {ccc}
0&1&0\\
0&0&0\\
0&0&0\end{array}\right), \quad P_{4} = \left( \begin{array} {ccc}
0&0&0\\
0&0&0\\
0&0&1\end{array}\right), \\
P_{5} = \left( \begin{array} {ccc}
0&0&0\\
1&0&0\\
0&0&0\end{array}\right), \quad P_{6} = \left( \begin{array} {ccc}
0&0&0\\
1&0&a\\
0&0&0\end{array}\right), \quad P_{7} = \left( \begin{array} {ccc}
1&a&0\\
\frac{1}{a}&1&0\\
0&0&0\end{array}\right), \quad
P_{8} = \left( \begin{array} {ccc}
1&\frac{a^2}{16}&0\\
\frac{16}{a^2}&-3&-\frac{8}{a}\\
0&a&2\end{array}\right),\\
P_{9} = \left( \begin{array} {ccc}
0&0&1\\
0&0&0\\
0&0&0\end{array}\right), \quad P_{10} = \left( \begin{array} {ccc}
0&0&1\\
0&0&0\\
0&-2&0\end{array}\right), \quad P_{11} = \left( \begin{array} {ccc}
0&1&a\\
0&0&0\\
0&0&0\end{array}\right), \quad P_{12} = \left( \begin{array} {ccc}
0&1&a\\
0&0&0\\
0&-2a&0\end{array}\right),\\
P_{13} = \left( \begin{array} {ccc}
0&1&a\\
0&0&0\\
0&2a&2a^{2}\end{array}\right), \qquad P_{14} = \left( \begin{array} {ccc}
0&1&a\\
0&-4a^{2}& -4a^{3}\\
0&4a&4a^{2}\end{array}\right), \qquad P_{15} = \left( \begin{array} {ccc}
0&0&0\\
0&0&1\\
-2&0&0\end{array}\right),\\
P_{16} = \left( \begin{array} {ccc}
0&0&0\\
1&0&a\\
-2a&0&0\end{array}\right), \qquad
P_{17} = \left( \begin{array} {ccc}
0&0&0\\
1&0&a\\
2a&0& 2a^{2}\end{array}\right), \qquad
P_{18} = \left( \begin{array} {ccc}
-4a^{2}&0& -4a^{3}\\
1&0&a\\
4a&0& 4a^{2}\end{array}\right),
\end{eqnarray*}

\begin{eqnarray*}
P_{19} = \left( \begin{array} {ccc}
1&  -\frac{3a^{2}}{4}&a\\
- \frac{4}{27a^{2}}& -\frac{1}{3}& 0\\
-\frac{8}{9a}&0& -\frac{2}{3}\end{array}\right),\quad
P_{20} = \left( \begin{array} {ccc}
a&0&- \frac{a^2}{2}\\
0& -a& - \frac{1}{2}\\
1&a^2&0\end{array}\right), \quad P_{21} = \left( \begin{array} {ccc}
a&4a^{3}&0\\
-\frac{1}{4a}&-a&0\\
1&4a^{2}&0\end{array}\right),
\end{eqnarray*}

\begin{eqnarray*}
 P_{22} = \left( \begin{array} {ccc}
-\frac{1}{4b}&a&-\frac{1+16ab^{3}}{16b^2}\\
b&-4ab^{2}&\frac{1+16ab^{3}}{4}\\
1&-4ab&\frac{1+16ab^{3}}{4b}\end{array}\right).
\end{eqnarray*}
\mlabel{thm:rbo}
\end{theorem}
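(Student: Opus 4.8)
The plan is to convert the operator identity (\ref{eq:rotaeq1}) into a finite system of polynomial equations in the nine entries $r_{ij}$ of (\ref{eq:matrix}) and then solve it completely, the theorem being an ``if and only if'' whose forward (verification) direction is routine and whose converse (exhaustiveness) direction is the real work. First I would exploit that both sides of (\ref{eq:rotaeq1}) are bilinear and alternating in $(x,y)$, so it suffices to impose the condition on the three unordered basis pairs $(e,f)$, $(e,h)$, $(f,h)$. Using the brackets (\ref{eq:product}), each such pair produces a vector identity in $\speciall$, and comparing the coefficients of $e$, $f$, $h$ turns it into three scalar relations, yielding a system of nine homogeneous quadratic equations $Q_k(r_{11},\dots,r_{33})=0$. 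The easy direction of the theorem---that each of $P_1,\dots,P_{22}$ is a Rota-Baxter operator---is then just the finite check that its matrix satisfies all nine $Q_k$, and the $\CC^*$-equivariance noted before the statement lets one normalise a distinguished nonzero entry (say to $1$), which is exactly how the free parameters $a,b$ survive in the list.

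For the converse I would record two structural facts that sharply cut down the case tree before any grinding. Since $[P(x),P(y)]=P(\,\cdots)\in\im P$, the subspace $\im P$ is a Lie subalgebra of $\speciall$. Moreover, if $P$ were invertible, substituting $x=P^{-1}(a)$, $y=P^{-1}(b)$ into (\ref{eq:rotaeq1}) gives $P^{-1}[a,b]=[P^{-1}a,b]+[a,P^{-1}b]$, i.e.\ $P^{-1}$ is a derivation of $\speciall$; but $\speciall$ is semisimple, so every derivation is inner, $P^{-1}=\mathrm{ad}(z)$, and $\mathrm{ad}(z)$ annihilates $z$ and is therefore singular---a contradiction. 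Hence $\det(r_{ij})=0$ for every Rota-Baxter operator and $\dim\im P\in\{0,1,2\}$, so I would organise the solution of the quadratic system by this rank.

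The rank $0$ case is $P=0$, giving $P_2$. For rank $1$, writing $P=v\otimes\ell$ with $P(x)=\ell(x)\,v$, the left side of (\ref{eq:rotaeq1}) vanishes and the right side forces $\ell(x)\,\ell([v,y])=\ell(y)\,\ell([v,x])$ for all $x,y$; equivalently the functional $\ell\circ\mathrm{ad}(v)$ is proportional to $\ell$, say $\ell\circ\mathrm{ad}(v)=c\,\ell$ with $c$ a scalar (possibly $0$). Classifying the pairs $(v,\ell)$ obeying this eigen-relation, up to the $\CC^*$-scaling, then produces the rank $1$ representatives, with the split $c=0$ versus $c\neq 0$ and the type of $v$ controlling the branches. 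For rank $2$, $\im P$ is a two-dimensional and hence (in $\speciall$) Borel subalgebra, and I would parametrise $P$ by its prescribed image together with its kernel and solve the residual equations $Q_k=0$ inside that stratum.

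The step I expect to be the main obstacle is precisely this exhaustive solving within the rank $1$ and rank $2$ strata. The nonlinear elimination repeatedly branches on the vanishing of individual entries, each leaf must then be normalised by the $\CC^*$-action, and one must verify that the surviving one- and two-parameter families (the $a$, and the $a,b$ of $P_{22}$) are genuine and that distinct branches neither duplicate nor omit members of the list. Insisting on classification up to scaling only---rather than up to the automorphism group, which would merge operators such as $P_1,P_3,P_4,P_5,P_9$ into a single orbit---is what pins the count at $22$ and what makes the careful tracking of every branch unavoidable.
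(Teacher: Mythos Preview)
Your plan is sound, and the two structural inputs---that $\im P$ is a subalgebra and that the derivation argument forces $\det P=0$---are correct and give a cleaner organising principle than the paper uses. The paper proceeds by pure elimination: after writing down the same nine quadratics, it branches first on whether $r_{31}=0$, then successively on the vanishing of $r_{13}$, $r_{21}$, $r_{12}$, $r_{32}$, $r_{23}$, etc., producing a case tree with roughly two dozen leaves that are matched one by one to $P_1,\dots,P_{22}$; no rank, subalgebra, or derivation argument appears. Your rank stratification explains a priori why no invertible operator occurs, and in rank~$1$ the eigen-relation $\ell\circ\mathrm{ad}(v)=c\,\ell$ (equivalently $[w,v]=cw$ after identifying $\ell=\langle w,\cdot\rangle$ via the Killing form) is a genuinely tidier parametrisation than the paper's entry-chasing. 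The trade-off is that in rank~$2$ the Borel observation buys less than it first appears: because the classification is only up to scaling and not up to $\mathrm{Aut}(\speciall)$-conjugation, you cannot normalise $\im P$ to a fixed Borel, and the Borels form a one-parameter family, so the residual elimination there will be comparable in length to the paper's direct computation. One small correction to your final aside: $P_1,P_3,P_4,P_5,P_9$ would not all merge under $\mathrm{Aut}(\speciall)$, since e.g.\ $P_4$ is idempotent while $P_1$ is nilpotent, and the image lines $\CC h$ (semisimple) and $\CC f$ (nilpotent) lie in different automorphism orbits.
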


In Sections~2.2 and 2.3, we prove Theorem~\mref{thm:rbo} by first reducing the matrix equation from the Rota-Baxter relation to
a system of 9 quadratic equations and then solving these quadratic equations.

\subsection{Reduction to quadratic equations}\label{sec:2.2}
By skew-symmetry, in order to show that $P$ is a Rota-Baxter operator, we only need to check
\begin{eqnarray*}
[P(e),P(f)] &=& P([P(e),f] + [e,P(f)]),\\{}
[P(e),P(h)] &=& P([P(e),h] + [e,P(h)]),\\{}
[P(f),P(h)] &=& P([P(f),h] + [f,P(h)]).
\end{eqnarray*}
It follows from Eq.~(\ref{eq:product}) that
\begin{eqnarray}
\lefteqn{[P(e),P(f)] = [r_{11}e+r_{12}f+r_{13}h,  r_{21}e+r_{22}f+ r_{23}h]} \notag\\
&=& 2(r_{13}r_{21} - r_{11}r_{23}) e + 2(r_{12}r_{23} - r_{13}r_{22}) f   + (r_{11}r_{22} - r_{12}r_{21}) h,\label{eq:lefthand}
\end{eqnarray}
while
\begin{eqnarray}
\lefteqn{P([P(e), f] + [e,P(f)] ) = P ((r_{11} + r_{22})h  -2 r_{23}e -2r_{13} f)} \notag\\
&=& (r_{11} + r_{22})r_{31} e  +  (r_{11} + r_{22}) r_{32} f   + (r_{11} + r_{22})r_{33} h -2 r_{23} r_{11} e -2 r_{23} r_{12} f  \notag\\
&&-2 r_{23} r_{13} h   -2 r_{13} r_{21} e  - 2 r_{13} r_{22} f  - 2 r_{13} r_{23} h \notag\\
 &=& \big((r_{11} + r_{22})r_{31} - 2r_{23}r_{11} - 2r_{13}r_{21}\big)e  + \big( (r_{11} + r_{22}) r_{32} -2 r_{23} r_{12} - 2 r_{13} r_{22} \big) f \notag\\
 && + \big((r_{11} + r_{22})r_{33} -2 r_{23} r_{13}  - 2 r_{13} r_{23} \big) h.\label{eq:righthand}
\end{eqnarray}
Comparing the coefficients in Eq.~(\ref{eq:lefthand}) and Eq.~(\ref{eq:righthand}), we have
\begin{eqnarray}
&4 r_{13} r_{21} =( r_{11} + r_{22} ) r_{31},&\label{eq:c1}\\
&4 r_{12} r_{23} = (r_{11} + r_{22}) r_{32},&\label{eq:c2}\\
&4 r_{23} r_{13} = (r_{11} + r_{22})r_{33} + r_{12}r_{21} - r_{11}r_{22}.&\label{eq:c3}
\end{eqnarray}
Similarly, from
\begin{equation*}
[P(e),P(h)] = P([P(e),h] + [e,P(h)]), \quad
[P(f),P(h)] = P([P(f),h] + [f,P(h)])
\end{equation*}
we obtain the following six equations
\begin{eqnarray}
&2r_{13}r_{31} = 2r_{12}r_{21} + r_{32}r_{31} - 2r_{11}^{2}, &\label{eq:c4}\\
&4r_{12}r_{33} = 2r_{12}r_{22} + r_{32}^{2} - 2 r_{11} r_{12} + 2 r_{13}r_{32},&\label{eq:c5}\\
&r_{11} r_{32} = 2r_{12} r_{23} + r_{32}r_{33} - 2(r_{11} + r_{33}) r_{13} + r_{12}r_{31},&\label{eq:c6}\\
&4r_{21}r_{33} = 2r_{23}r_{31} + 2r_{21}r_{11} + r^{2}_{31} - 2r_{22}r_{21},&\label{eq:c7}\\
&2r_{22}^{2} =- 2 r_{23} r_{32} +   2r_{21}r_{12} + r_{31}r_{32}, &\label{eq:c8}\\
&r_{21} r_{32} - r_{22} r_{31} = 2(r_{22} + r_{33}) r_{23}  - 2r_{21} r_{13} - r_{31}r_{33}.&\label{eq:c9}
\end{eqnarray}

\subsection{Solving the quadratic equations}\label{sec:2.3}
To solve the quadratic equations (\mref{eq:c1})-(\mref{eq:c9}), we distinguish the two cases depending on whether or not $r_{31}=0$.
\smallskip

\noindent
{\bf Case (\textrm{I}) $r_{31}=0$: } Then Eq.~(\ref{eq:c1}) implies $r_{13}r_{21} =0$. There are three subcases: (a) $r_{13} = r_{21}=0$; (b) $r_{13}\neq0, r_{21}=0$, and  (c) $r_{13}=0,r_{21} \neq 0$.
    \begin{enumerate}
    \item[(a)] Assume $r_{13}=0$ and $r_{21} =0$. Then Eq.~(\ref{eq:c4}) implies $r_{11}=0$. Also Eq.~(\ref{eq:c9}) implies $r_{23}(r_{22}+r_{33})=0$.
        \begin{enumerate}
          \item[(a1)] If $r_{23} \neq 0$, then $r_{22} = -r_{33}$. Eq. (\ref{eq:c3}) implies $r_{33}^{2}=0$ and then $r_{22}=r_{33}=0$. Each of Eq. (\ref{eq:c2}) and Eq. (\ref{eq:c6}) gives $r_{12} =0$ and each of Eq. (\ref{eq:c5}) and Eq. (\ref{eq:c8}) gives $r_{32}=0$. Taking $r_{23} = 1$, we obtain $P_{1}$.
          \item[(a2)] If $r_{23} =0$, then Eq. (\ref{eq:c8}) implies $r_{22}=0$. Then Eqs. (\ref{eq:c5}) and (\ref{eq:c6}) imply
            $$
            r_{32}^2 = 4 r_{12}r_{33}, \quad r_{32} r_{33}=0.
            $$
              \begin{enumerate}
              \item[(a21)] If $r_{33}=0$, then $r_{32}=0$. Further, if $r_{12} =0$, we get $P_{2}$; while if $r_{12} \neq 0 $ and taking $r_{12} =1$, we get $P_{3}$.
              \item[(a22)] If $r_{33}\neq 0$, then $r_{32} = r_{12} =0$. Taking $r_{33} =1$, then we get $P_{4}$.
              \end{enumerate}
         \end{enumerate}
    \item[(b)] Assume $r_{13}=0$ and $r_{21} \neq 0$.  Then Eq. (\ref{eq:c7}) implies $r_{21}(r_{11}-r_{22} -2r_{33})=0$  and then $r_{33} = \displaystyle \frac{r_{11}-r_{22}}{2}$. Eq. (\ref{eq:c4}) implies $r_{11}^{2} = r_{12} r_{21}$ and then
$$
r_{11}=0 \Longleftrightarrow r_{12} =0.
$$
So we have
          \begin{enumerate}
          \item[(b1)] If $r_{11} = r_{12} =0$, then Eq. (\ref{eq:c5}) gives $r_{32}=0$,  Eq. (\ref{eq:c7}) gives $r_{22}=0$, and Eq. (\ref{eq:c7}) gives $4r_{21}r_{33}=0$ and then $r_{33}=0$. Finally, $r_{23}$ can be arbitrary. Taking $r_{21}=1$ and also $r_{23}=a$ when $r_{23} \neq 0$, we get $P_{5}$ and $P_{6}$.
          \item[(b2)]  If $r_{11} \neq 0$ and $r_{12} \neq 0$, taking $r_{11} =1$, then Eq.~(\ref{eq:c4}) implies $2 = 2 r_{12} r_{21}$ and then $r_{21}= \displaystyle \frac{1}{r_{12}}$. Since $r_{33} = \displaystyle \frac{1-r_{22}}{2}$, Eq.~(\ref{eq:c3}) implies
              $$
               r_{22}^{2} + 2r_{22} - 3=0.
              $$
              Hence, $r_{22} =1$ or $r_{22} =-3$.
                  \begin{enumerate}
                    \item[(b21)] If $r_{22} =1$, then $r_{33}=0$. Eq. (\ref{eq:c5}) implies $r_{32}^{2}=0$ and then $r_{32}=0$ and Eq. (\ref{eq:c2}) implies $r_{23}=0$. Denoting $r_{12}=a$, we get $P_{7}$.
                    \item[(b22)] If $r_{22}=-3$, then $r_{33}=2$. Thus Eq. (\ref{eq:c5}) implies $16 r_{12} = r_{32}^2 $ and then $r_{12} =\displaystyle \frac{r_{32}^{2}}{16}$, $r_{32} \neq 0$.  Eq. (\ref{eq:c8}) implies $r_{23} r_{32} = -8$ and then $r_{23} = \displaystyle -\frac{8}{r_{32}}$. Taking $r_{32}=a$, we get $P_{8}$.
                  \end{enumerate}
          \end{enumerate}
    \item[(c)] Assume $r_{13} \neq 0$, $r_{21} =0$. Then Eq. (\ref{eq:c4}) implies $r_{11}=0$.
          \begin{enumerate}
          \item[(c1)] If $r_{12}=0$, taking $r_{13}=1$, then Eq.~(\ref{eq:c5}) implies $r_{32}(2+r_{32}) = 0$. Thus $r_{32}=0$ or $r_{32}=-2$.
               \begin{enumerate}
               \item[(c11)] If $r_{32}=0$, then Eq.~(\ref{eq:c6}) implies $r_{33}=0$ and Eq.~(\ref{eq:c8}) implies $r_{22}=0$. Thus Eq. (\ref{eq:c3}) gives $r_{23}=0$. We obtain $P_{9}$.
               \item[(c12)] If $r_{32}=-2$, then Eq.~(\ref{eq:c2}) implies $r_{22}=0$. Further Eq.~(\ref{eq:c3}) implies $r_{23}=0$ and Eq. (\ref{eq:c6}) implies $r_{33}=0$. Thus we obtain $P_{10}$.
               \end{enumerate}
          \item[(c2)] If $r_{12} \neq 0$, then take $r_{12} =1$ and $r_{13}=a$.
\begin{enumerate}
\item[(c21)] If $r_{23}=0$, then Eq.~(\ref{eq:c8}) implies $r_{22}=0$. Eqs. (\ref{eq:c5}) and (\ref{eq:c6}) imply
$$
-2ar_{32} - r_{32}^{2} + 4  r_{33} = 0 ,\quad  (2 a - r_{32}) r_{33}=0.
$$
\begin{enumerate}
\item[(c211)] If $r_{33}=0$, then $r_{32}=0$ or $r_{32}=-2a$. We obtain $P_{11}$ or $P_{12}$ respectively.
\item[(c212)] If $r_{33} \neq 0$, then $r_{32}=2a$ and $r_{33} =2a^{2}$. We get $P_{13}$.
\end{enumerate}
\item[(c22)]  If $r_{23} \neq 0$, then Eq. (\ref{eq:c9}) implies $r_{22}+r_{33}=0$. Further, Eq. (\ref{eq:c3}) and Eq. (\ref{eq:c8}) imply
$$
4 a r_{23} + r_{33}^{2}=0, \quad -2 r_{23} r_{32} - 2 r_{33}^{2}=0.
$$
Hence $r_{32} =4a$. Eq. (\ref{eq:c5}) implies $r_{33} = 4a^{2}$ and then $r_{22} = -4a^{2}$. Also Eq.~(\ref{eq:c2}) implies $r_{23} = -4a^{3}$. Thus we get $P_{14}$.
\end{enumerate}
          \end{enumerate}
    \end{enumerate}

\noindent
{\bf Case (\textrm{II}) $r_{31} \neq 0$}. Then we distinguish two subcases: (d) $r_{32}=0$ and (e) $r_{32}\neq 0$.
\begin{enumerate}
\item[(d)] Assume $r_{32}=0$. Then Eq. (\ref{eq:c2}) implies $r_{12}r_{23}=0$. There are three different subcases to be considered: $r_{12} = r_{23} =0; r_{12}=0, r_{23} \neq 0; r_{12} \neq 0, r_{23} =0$.
    \begin{enumerate}
    \item[(d1)] If $r_{12}=0$ and $r_{23}=0$, then Eq. (\ref{eq:c8}) implies $r_{22}=0$ and Eq. (\ref{eq:c3}) implies $r_{11}r_{33}=0$.
              \begin{enumerate}
              \item[(d11)] If $r_{11}=0$, then Eq. (\ref{eq:c4}) implies $r_{13}=0$. Then Eqs. (\ref{eq:c7}) and (\ref{eq:c9}) imply $r_{31}=0$ and $r_{33}=0$. It is in contradiction with $r_{31} \neq 0$.
              \item[(d12)] If $r_{11}\neq 0$, then $r_{33}=0$.  then Eq. (\ref{eq:c6}) implies $r_{13}=0$.  Then Eq. (\ref{eq:c4}) implies $r_{11}=0$, a contradiction with $r_{11} \neq 0$.
              \end{enumerate}
     So $r_{12}$ and $r_{23}$ can not be zero at the same time.
     \item[(d2)] If $r_{12}=0$ and $r_{23} \neq 0$, then Eq.~(\ref{eq:c8}) implies $r_{22}=0$.
              \begin{enumerate}
              \item[(d21)] If $r_{21}=0$, then Eq.~(\ref{eq:c1}) implies $r_{11}=0$. Eq.~(\ref{eq:c4}) implies $r_{13}=0$. Eqs.~(\ref{eq:c7}) and (\ref{eq:c9}) imply $r_{31} = -2r_{23}$ and $r_{33}=0$. Taking $r_{23}=1$, we get $P_{15}$.
              \item[(d22)] If $r_{21} \neq 0$ and $r_{13}=0$, then Eq.~(\ref{eq:c1}) implies $r_{11}r_{31} =0$ and then $r_{11}=0$. Then Eqs. (\ref{eq:c7}) and (\ref{eq:c9}) imply
                 $$
                   2r_{23}r_{31} + r_{31}^{2} - 4 r_{21} r_{33} = 0, \quad  -2r_{23}r_{33} + r_{31} r_{33}=0.
                 $$
                Then we have $r_{33}=0$ and $r_{31}=-2r_{23}$, or $r_{31}=2r_{23}$ and $r_{33}= \displaystyle \frac{2r_{23}^{2}}{r_{21}}$. Taking $r_{21} =1$ and $r_{23} =a$, we obtain $P_{16}$ and $P_{17}$ respectively.
              \item[(d23)] If $r_{21} \neq 0$ and $r_{13} \neq 0$, then Eq.~(\ref{eq:c6}) implies $r_{11} + r_{33}=0$.
                       \begin{enumerate}
                       \item[(d231)] If $r_{11} = r_{33} =0$, then Eq. (\ref{eq:c1}) implies $r_{13}=0$ and gives a contradiction with $r_{13} \neq 0$.
                       \item[(d232)] If $r_{33} = - r_{11} \neq 0$, then Eqs. (\ref{eq:c4}) and (\ref{eq:c5}) imply
                          $$
                         r_{11}^{2} + 4 r_{13} r_{23}=0, \quad  2 r_{11}^2 + 2 r_{13} r_{31} =0.
                          $$
                         Hence, we have $r_{31}=4r_{23}$. Then Eqs. (\ref{eq:c1}), (\ref{eq:c3}) and (\ref{eq:c7}) imply
                        $$
                         r_{13}r_{21} = r_{11}r_{23}, \quad r_{11}^{2} = -4r_{13}r_{23}, \quad r_{11}r_{21} = -4r_{23}^{2}.
                        $$
                       Then $r_{11} = \displaystyle \frac{-4r_{23}^{2}}{r_{21}}$,  $r_{13} = \displaystyle \frac{-4r_{23}^{3}}{r_{21}^{2}}$. Taking $r_{21}=1$ and $r_{23}=a$, we obtain $P_{18}$.
                       \end{enumerate}
              \end{enumerate}
    \item[(d3)]   If $r_{12} \neq 0$ and $r_{23} =0$, then Eq. (\ref{eq:c5}) implies $r_{33} = \displaystyle \frac{r_{22} - r_{11}}{2}$.
              \begin{enumerate}
              \item[(d31)] If $r_{13}=0$, then Eq. (\ref{eq:c6}) implies $r_{31}=0$ and gives a contradiction. Hence, $r_{13} \neq 0$.
              \item[(d32)] If $r_{22} =0$, then Eq. (\ref{eq:c8}) implies $r_{21}=0$. Then Eq. (\ref{eq:c1}) implies $r_{11}=0$ and Eq. (\ref{eq:c3}) implies $r_{13}=0$, giving a contradiction. So $r_{22} \neq 0$.
              \item[(d33)] If $r_{11} = 0$, then Eq. (\ref{eq:c5}) implies $r_{33} = \displaystyle \frac{r_{22}}{2}$. Thus  Eqs.~(\ref{eq:c3}) and (\ref{eq:c8}) imply
                $$
                 r_{12}r_{21} + \frac{r_{22}^{2}}{2} =0, \quad  r_{12}r_{21} -r_{22}^{2}=0.
                 $$
                  Hence $r_{22}=0$, giving a contradiction. Therefore, $r_{11} \neq 0$.
              \item[(d34)] If $r_{21} = 0$, then Eq.~(\ref{eq:c7}) implies that $r_{31} =0$ and gives a contradiction. Hence, $r_{21} \neq 0$.
              \end{enumerate}
              In summary we have $r_{11}, r_{12}, r_{13}, r_{21}, r_{22},r_{31} \neq 0$, $r_{23} = r_{32}=0$ and $r_{33} = \displaystyle \frac{r_{22}-r_{11}}{2}$. Taking $r_{11} =1$ and $r_{13} =a$, then Eqs. (\ref{eq:c2}) and (\ref{eq:c9}) become
             $$
              1+2r_{22} = r_{22}^{2} +2r_{12}r_{21},\quad  r_{12} r_{21} = r_{22}^{2}.$$
             Hence, we have
          \begin{eqnarray*}
           3r_{22}^{2} -2 r_{22} - 1 =0,
             \end{eqnarray*}
             and then $r_{22} = 1$ or $r_{22} = -\displaystyle \frac{1}{3}$.
            If $r_{22} =1$, then Eq.~(\ref{eq:c7}) implies $r_{31}=0$ and gives a contradiction. If $r_{22} = - \displaystyle \frac{1}{3}$,
           then $r_{12} = \displaystyle -\frac{3a^{2}}{4}, r_{21} = -\frac{4}{27a^{2}}, r_{31} = -\frac{8}{9a}$ and $r_{33}=\displaystyle - \frac{2}{3}$. Thus we get $P_{19}$.
    \end{enumerate}
\item[(e)] Assume $r_{32} \neq 0$. Take $r_{31}=1$.
     \begin{enumerate}
          \item[(e1)] If $r_{21} =0$, then Eq. (\ref{eq:c1}) implies $r_{22} = -r_{11}$, Eq. (\ref{eq:c7}) implies $r_{23} = \displaystyle-\frac{1}{2}$ and Eq. (\ref{eq:c2}) implies $r_{12} =0$. Further, Eq. (\ref{eq:c5}) implies $r_{32} (2r_{13} + r_{32})=0$ and hence $r_{13} = \displaystyle-\frac{r_{32}}{2}$. Moreover, Eq. (\ref{eq:c9}) implies $r_{33}=0$ and Eq. (\ref{eq:c3}) implies $r_{32} = r_{11}^{2}$. Taking $r_{11} =a$, we conclude
          \begin{eqnarray*}
          &r_{11} = a, \quad r_{12}=0, \quad r_{13}=\displaystyle-\frac{a^2}{2}, \quad r_{21}=0, \quad r_{22}=-a, \quad r_{23} =\displaystyle -\frac{1}{2}, \quad r_{33}=0.& \\
          \end{eqnarray*}
          Thus we obtain $P_{20}$.
          \item[(e2)] If $r_{21} \neq 0$ and $r_{13} =0$, then Eq. (\ref{eq:c1}) implies $r_{22} = -r_{11}$. Eqs. (\ref{eq:c2}), (\ref{eq:c3}) and (\ref{eq:c4}) imply
            \begin{eqnarray*}
            4 r_{12} r_{23}=0, \quad  -r_{11}^{2} - r_{21} r_{12} = 0, \quad -r_{32} + 2 r_{11}^2 - 2 r_{21} r_{12}=0.
            \end{eqnarray*}
            Thus we have $r_{12} = \displaystyle-\frac{r_{32}}{4r_{21}} \neq 0$. Hence $r_{23} =0$ and $ r_{32} = 4 r_{11}^{2}$. Taking $r_{11} =a$, then we have
            \begin{eqnarray*}
            r_{22}=-a, \quad r_{23}=0,\quad r_{32} = 4a^{2}.
            \end{eqnarray*}
            Eq. (\ref{eq:c5}) implies the same equation as Eq. (\ref{eq:c9}) and Eq. (\ref{eq:c6}) implies the same equation as Eq. (\ref{eq:c7}), so we have the following two equations
                   $$
                   a + 4 a^2 r_{21} + r_{33}=0, \quad 4a + \frac{1}{r_{21}} - 4 r_{33}=0.
                   $$
                 Thus we have $r_{21} = -\displaystyle\frac{1}{4a}$ and $r_{33}=0$. Then $r_{12} = 4a^{3}$. That is
              \begin{eqnarray*}
              &\quad r_{12} = 4a^{3}, \quad r_{21}=\displaystyle -\frac{1}{4a}, \quad r_{22}=-a,\quad r_{32} = 4a^{2}, \quad r_{33}=0.&
              \end{eqnarray*}
            In summary we obtain $P_{21}$.
            \item[(e3)] If $r_{21} \neq 0, r_{13} \neq 0$. Then Eq. (\ref{eq:c1}) and $r_{13}r_{21} \neq 0$ imply $r_{11} + r_{22} \neq 0$. Therefore, Eq. (\ref{eq:c2}) implies $r_{32} (r_{11} + r_{22}) = 4 r_{12} r_{23}$ and then $r_{12} \neq 0, r_{23} \neq 0$. Eq. (\ref{eq:c1}) and (\ref{eq:c2}) imply $r_{11} =4 r_{13} r_{21} - r_{22}$, $r_{32} = \displaystyle \frac{r_{12}r_{23}}{r_{13} r_{21}}$. Then  Eq. (\ref{eq:c3}) implies
                $$
                r_{33} = -\frac{r_{12}}{4r_{13}} + r_{22} + \frac{r_{23}}{r_{21}} - \frac{r_{22}^{2}}{4r_{13}r_{21}}.
                $$
                Thus, Eq. (\ref{eq:c7}) implies
                $$
                r_{23} = \frac{r_{13} + r_{12} r_{21} + 8 r_{13}^{2} r_{21}^{2} - 8 r_{13} r_{21} r_{22} + r_{22}^{2}}{2r_{13}}.
                $$
             Hence,
             \begin{eqnarray*}
             &r_{32} =  \frac{r_{12}}{r_{13} r_{21}} \big( \frac{r_{13} + r_{12} r_{21} + 8 r_{13}^{2} r_{21}^{2} - 8 r_{13} r_{21} r_{22} + r_{22}^{2}}{2r_{13}}   \big),&\\
             &r_{33} = -\frac{r_{12}}{4r_{13}} + r_{22} + \frac{r_{13} + r_{12} r_{21} + 8 r_{13}^{2} r_{21}^{2} - 8 r_{13} r_{21} r_{22} + r_{22}^{2}}{2r_{13}r_{21}} - \frac{r_{22}^{2}}{4r_{13}r_{21}}.&
             \end{eqnarray*}
             Then Eq.~(\ref{eq:c4}) becomes
             $$ 2 r_{13} - 2 r_{12} r_{21} + 2 (-4 r_{13} r_{21} + r_{22})^{2} =\frac{r_{12} \big(r_{13} + r_{12} r_{21} + 8 r_{13}^{2} r_{21}^{2} - 8 r_{13} r_{21} r_{22} + r_{22}^2 \big)}{2 r_{13}^{2} r_{21}},$$
             which simplifies to
             $$r_{13} + r_{12} r_{21} + 16 r_{13}^2 r_{21}^2 - 8 r_{13} r_{21} r_{22} + r_{22}^2=0.$$
             Then $r_{22} = 4 r_{13} r_{21} \pm \sqrt{-r_{13} - r_{12} r_{21}}$ and
             \begin{eqnarray*}
             &r_{23} = -4r_{13} r_{21}^{2}, \quad r_{33} = \frac{1}{4 r_{21}} - 4 r_{13} r_{21} \mp \sqrt{-r_{13} - r_{12} r_{21}},&\\
             &r_{32} = -4r_{12}r_{21}, \quad r_{11} = \mp \sqrt{-r_{13} - r_{12} r_{21}}.&
             \end{eqnarray*}
             Thus each of the Eqs. (\ref{eq:c1})-(\ref{eq:c9}) is either trivial or implies
             \begin{eqnarray*}
             1 - 16 r_{13} r_{21}^2 - 16 r_{12} r_{21}^3 \mp 8 r_{21} \sqrt{-r_{13} - r_{12} r_{21}}=0.
             \end{eqnarray*}
             Therefore, we have,
             \begin{eqnarray*}
             (r_{13} + \frac{1}{16r_{21}^2} +r_{12}r_{21} )^2=0.
             \end{eqnarray*}
             Hence, $r_{13} =-\frac{1}{16r_{21}^2} -r_{12}r_{21}$. Thus
             \begin{eqnarray*}
             &r_{11} =\mp \frac{1}{4\sqrt{r_{21}^{2}}}, \quad r_{33} =\frac{1}{2r_{21}} \mp \frac{1}{4\sqrt{r_{21}^{2}}} + 4r_{12}r_{21}^{2},&\\
             &r_{23} = \frac{1}{4} + 4r_{12}r_{21}^{3}, \quad r_{22} = -\frac{1}{4r_{21}} \pm \frac{1}{4 \sqrt{r_{21}^{2}}}- 4r_{12}r_{21}^{2}.&
             \end{eqnarray*}
             Then Eq. (\ref{eq:c5}) gives
             $
             1 \mp \frac{1}{\sqrt{r_{21}^2}} r_{21} =0,
             $
             that is $\sqrt{r_{21}^{2}} =\pm r_{21}$. Therefore, we get
             \begin{eqnarray*}
             r_{11} = -\frac{1}{4r_{21}},\quad r_{13} = -\frac{1}{16r_{21}^{2}} - r_{12} r_{21}, \quad r_{22} = -4r_{12}r_{21}^{2},\\
             r_{23} = \frac{1}{4} + 4r_{12}r_{21}^{3}, \quad
             r_{32}= -4 r_{12}r_{21},  \quad r_{33} = \frac{1}{4r_{21}} + 4r_{12}r_{21}^{2}.
             \end{eqnarray*}
             Since for any $r_{21} \neq 0$, we have $r_{21}= \pm \sqrt{r_{21}^{2}}$, we can take $r_{21}=b$. Then taking $r_{12}=a$, we obtain $P_{22}$.
          \end{enumerate}
\end{enumerate}

One checks that all the above prescriptions for $P_i$ are indeed solutions of Eqs.~(\mref{eq:c1})-(\ref{eq:c9}). The proof of Theorem~\mref{thm:rbo} is completed.

\section{Rota-Baxter operators on $\speciall$ and solutions of CYBE}

We now give the first application of our classification theorem, namely using the Rota-Baxter operators obtained in Theorem~\mref{thm:rbo} to derive solutions of CYBE in the Lie algebras $\speciall$ and $\speciall \ltimes_{{\rm ad}^{\ast}} \speciall^{\ast}$.

\subsection{Skew-symmetric solutions of CYBE in $\speciall$}

Let $\mathfrak{g}$ be a Lie algebra with a nondegenerate invariant bilinear form.
Then we can identify $\mathfrak{g}$ with $\mathfrak{g}^{\ast}$ via the bilinear
form. Thus
$$\mathfrak{g} \otimes \mathfrak{g} \cong \mathfrak{g}
\otimes \mathfrak{g}^{\ast} \cong {\rm End}({\mathfrak{g}}).$$
We would like to emphasize that the explicit correspondence (in particular, between the basis of $\frak g$ and $\frak g^*$) holds
only when we take an orthonormal
basis of $\mathfrak{g}$. Under the orthonormal basis, $r \in \mathfrak{g}^{2}$ is skew-symmetric if and only if
the matrix form of its corresponding operator is skew-symmetric and furthermore,
$r$ is a skew-symmetric solution of CYBE
in $\mathfrak{g}$ if and only if its corresponding operator is a {\em skew-symmetric} Rota-Baxter operator on $\mathfrak{g}$ in the sense that the corresponding matrix is skew-symmetric.

It is easy to check that
$$\alpha = \frac{\sqrt{2}i}{4}(e-f), \quad \beta = \frac{\sqrt{2}}{4}(e+f), \quad \gamma = \frac{\sqrt{2}}{4} h,$$
is an orthonormal basis of $\speciall$ with respect to the Killing form.

\begin{theorem}
Under the orthonormal basis $\{\alpha, \beta, \gamma\}$, the matrices of the non-zero skew-symmetric Rota-Baxter operators on $\speciall$ are given by
$$
P_{10} \leftrightarrow \left( \begin{array} {ccc}
0&0&i\\
0&0&1\\
-i&-1&0\end{array}\right), \quad
P_{15} \leftrightarrow \left( \begin{array} {ccc}
0&0&-i\\
0&0&1\\
i&-1&0\end{array}\right), \quad
P_{20} \leftrightarrow \left( \begin{array} {ccc}
0&a i& \frac{1-a^2}{2} i\\
-a i& 0 & - \frac{1+a^{2}}{2}\\
\frac{a^{2}-1}{2}i&\frac{1+a^2}{2}&0\end{array}\right).
$$
The corresponding skew-symmetric solutions of  CYBE in $\speciall$ are given by
\begin{eqnarray*}
r_{1}&=& k(e \otimes h -h \otimes e),\\
r_{2}&=& k(f \otimes h - h \otimes f),\\
r_{3}&=& k \left (a(f \otimes e - e \otimes f) + \frac{1}{2}(h \otimes f - f \otimes h)
+ \frac{a^{2}}{2}(h \otimes e - e \otimes h)\right),
\end{eqnarray*}
where $k,a \in \mathbb{C}\backslash \{0\}$. Moreover, together with the zero solution, they correspond precisely to all the skew-symmetric solutions of CYBE in $\speciall$ given by Belevin and Drinfeld in~\mcite{BD}.
\mlabel{thm:cybe}
\end{theorem}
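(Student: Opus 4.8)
The plan is to combine the classification in Theorem~\ref{thm:rbo} with the principle, recorded just above the statement, that under an orthonormal basis an element $r\in\speciall\otimes\speciall$ is a skew-symmetric solution of CYBE if and only if its associated operator is a skew-symmetric Rota-Baxter operator (this rests on the self-duality furnished by the Killing form together with Semenov-Tian-Shansky's equivalence between the operator form~(\ref{eq:rotaeq1}) and the tensor form~(\ref{eq:CYBE}) in the presence of a nondegenerate invariant form and skew-symmetry). Thus the first task is pure linear algebra: decide, for each of the $22$ representatives $P_i$, whether its matrix becomes skew-symmetric in the basis $\{\alpha,\beta,\gamma\}$.

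Rather than change basis for all $22$ operators, I would first record a basis-free criterion. Being skew-symmetric in an orthonormal basis is equivalent to being skew-adjoint for the Killing form $\kappa$, i.e. $\kappa(P(x),y)=-\kappa(x,P(y))$. Writing $G$ for the Gram matrix of $\kappa$ in the Cartan-Weyl basis (a direct computation gives $\kappa(e,f)=4$, $\kappa(h,h)=8$, and all other pairings zero) and $R=(r_{ij})$ for the matrix in~(\ref{eq:matrix}), this condition reads $RG+GR^{T}=0$; since $G$ is symmetric this says exactly that $RG$ is skew-symmetric. Reading off entries, skew-symmetry is equivalent to the scalar conditions
\[
r_{12}=r_{21}=r_{33}=0,\qquad r_{11}+r_{22}=0,\qquad 2r_{13}+r_{32}=0,\qquad 2r_{23}+r_{31}=0.
\]
Because scaling by $\mathbb{C}^{*}$ preserves each of these, it suffices to test the representatives, and a quick pass through the list shows that exactly $P_{10},P_{15},P_{20}$ survive (besides the trivial $P_{2}=0$).

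Next I would carry out the change of basis explicitly. Inverting the defining relations gives $e=-\sqrt{2}\,i\,\alpha+\sqrt{2}\,\beta$, $f=\sqrt{2}\,i\,\alpha+\sqrt{2}\,\beta$, $h=2\sqrt{2}\,\gamma$; applying each of $P_{10},P_{15},P_{20}$ to $\alpha,\beta,\gamma$ and re-expanding produces the three skew-symmetric matrices displayed in the statement. To pass from operators to tensors I would use the orthonormal identification $r=\sum_{i,j}m_{ij}\,v_i\otimes v_j$ (with $(v_1,v_2,v_3)=(\alpha,\beta,\gamma)$ and $(m_{ij})$ the orthonormal matrix), and then re-expand $\alpha,\beta,\gamma$ in terms of $e,f,h$. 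Collecting terms yields, up to the overall scalar $k$ coming from the $\mathbb{C}^{*}$-action (and a harmless reparametrization $a\mapsto a^{-1}$ in the $P_{20}$ case), exactly $r_1,r_2,r_3$; that each is genuinely a solution of CYBE is automatic from the operator--tensor equivalence invoked above.

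Finally, completeness is immediate: Theorem~\ref{thm:rbo} exhausts all Rota-Baxter operators up to scaling, and the criterion above shows the skew-symmetric ones are precisely the scalings of $P_{10},P_{15},P_{20}$ and of $0$, so $\{0,r_1,r_2,r_3\}$ is the full set of skew-symmetric solutions of CYBE in $\speciall$. The remaining assertion --- that this list coincides with the one of Belavin and Drinfeld~\cite{BD} --- is then a matter of matching our formulas against their tabulated skew-symmetric solutions for $\speciall$, up to the available symmetries ($\mathbb{C}^{*}$-scaling, the parameter $a$, and automorphisms of $\speciall$). I expect the main friction to be bookkeeping rather than conceptual: keeping several conventions mutually consistent --- the transpose built into~(\ref{eq:matrix}), the normalization of $\kappa$, and the tensor identification --- so that the factors and signs in the displayed matrices and in $r_1,r_2,r_3$ come out exactly, and then aligning our parametrization with that of Belavin and Drinfeld.
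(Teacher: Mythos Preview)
Your proposal is correct and follows the same overall architecture as the paper: invoke Theorem~\ref{thm:rbo}, isolate the operators whose orthonormal matrix is skew-symmetric, translate to tensors via the Killing-form identification, and finally compare with Belavin--Drinfeld. Two points of contrast are worth noting. First, your basis-free criterion---that skew-symmetry in the orthonormal basis is equivalent to $RG$ being skew-symmetric, which unpacks to the six scalar conditions $r_{12}=r_{21}=r_{33}=0$, $r_{11}+r_{22}=0$, $2r_{13}+r_{32}=0$, $2r_{23}+r_{31}=0$---is cleaner than what the paper does (it simply asserts that a ``straightforward check'' singles out $P_{10},P_{15},P_{20}$), and it makes the scan through the $22$ representatives genuinely quick. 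Second, for the Belavin--Drinfeld comparison you argue abstractly: since both lists are proved complete, they must coincide, and only the parametrizations need aligning. The paper instead carries out the matching constructively, solving the system coming from $[x,y]=y$ in $\speciall$ case by case (splitting on $x_1=0$ versus $x_1\neq 0$) and showing each nonzero solution $x\otimes y-y\otimes x$ equals one of $r_1,r_2,r_3$ for suitable $k,a$. Your route is logically sufficient and shorter; the paper's is more explicit and exhibits the dictionary between the two descriptions.
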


In fact, by Theorem~\mref{thm:rbo}, it is straightforward to check that under the orthonormal basis $\{\alpha, \beta, \gamma\}$,
only three types of non-zero Rota-Baxter operators whose corresponding matrices are skew-symmetric: $P_{10},P_{15},P_{20}$. Moreover,
the corresponding matrices under the orthonormal basis $\{\alpha, \beta, \gamma\}$ are given by the three matrices in the theorem and the corresponding skew-symmetric solutions of CYBE follow immediately.

On the other hand, we recall the following classification result on the skew-symmetric solution of CYBE in $\speciall$ given by Belevin and Drinfeld.
\begin{theorem}\mcite{BD}
Let $x,y \in \mathfrak{g}$ such that $[x, y] = y$. Then $r = x \otimes y - y \otimes x$ is a skew-symmetric solution of (\mref{eq:CYBE}). Furthermore, this construction gives all nonzero skew-symmetric solutions for $\speciall$.
\end{theorem}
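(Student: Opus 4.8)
The plan is to prove the two assertions separately: the first (that the construction yields solutions) by a direct computation that in fact works over any Lie algebra, and the second (that it yields \emph{all} nonzero skew-symmetric solutions for $\speciall$) by a rank-and-structure argument specific to $\speciall$. For the forward direction I would expand the CYBE expression for $r=x\otimes y-y\otimes x$. Writing $r=\sum_i a_i\otimes b_i$ with $(a_1,b_1)=(x,y)$ and $(a_2,b_2)=(-y,x)$, and using $[r_{12},r_{13}]=\sum_{i,j}[a_i,a_j]\otimes b_i\otimes b_j$ together with the analogous formulas $[r_{12},r_{23}]=\sum_{i,j}a_i\otimes[b_i,a_j]\otimes b_j$ and $[r_{13},r_{23}]=\sum_{i,j}a_i\otimes a_j\otimes[b_i,b_j]$, the whole sum collapses to the fully antisymmetric tensor. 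Concretely, a short computation shows that for a decomposable skew tensor $r=x\wedge y:=x\otimes y-y\otimes x$ one has
$$[r_{12},r_{13}]+[r_{12},r_{23}]+[r_{13},r_{23}]=x\wedge y\wedge[x,y]$$
in $\mathfrak g^{\otimes 3}$. Hence when $[x,y]=y\in\operatorname{span}\{x,y\}$ the right-hand side vanishes and $r$ solves the CYBE. The only thing to watch is keeping the conventions for $r_{12},r_{13},r_{23}$ and the signs straight; establishing this displayed identity cleanly is the technical heart of the argument, and it yields the forward direction as an immediate special case.

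For the converse I would combine three observations. First, since $\dim\speciall=3$, every element of $\wedge^2\speciall$ is decomposable (the Pl\"ucker obstruction $r\wedge r$ lives in $\wedge^4\speciall=0$), so any nonzero skew-symmetric $r$ can be written $r=x\wedge y$ with $x,y$ linearly independent. Second, by the identity above $r$ solves the CYBE if and only if $x\wedge y\wedge[x,y]=0$, which (as $x,y$ are independent) is equivalent to $[x,y]\in\operatorname{span}\{x,y\}$; thus $V:=\operatorname{span}\{x,y\}$ is a two-dimensional Lie subalgebra. Third --- and this is where the structure of $\speciall$ genuinely enters --- every nonzero element of $\speciall$ is regular, so its centralizer is one-dimensional; consequently $\speciall$ contains no two-dimensional abelian subalgebra, and $V$ must be the unique non-abelian two-dimensional Lie algebra. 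Choosing a basis $\{x',y'\}$ of $V$ with $[x',y']=y'$ and rescaling $y'$ (which preserves the relation, since $[x',\lambda y']=\lambda y'$), we obtain $r=x\wedge y=x'\wedge(\lambda y')$, exhibiting $r$ in the required form. The main obstacle is therefore not computational but conceptual: correctly reducing the CYBE to the condition that $\operatorname{span}\{x,y\}$ be a subalgebra, after which the classification of two-dimensional Lie algebras finishes the proof.

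Alternatively, since Theorem~\ref{thm:cybe} already lists $r_1,r_2,r_3$ as all the nonzero skew-symmetric solutions --- obtained from the skew Rota--Baxter operators of Theorem~\ref{thm:rbo} via the Semenov-Tian-Shansky correspondence, and hence independently of Belavin--Drinfeld --- one can instead verify the claim case by case. For instance $r_1=x\wedge y$ with $x=\tfrac12 h$, $y=-2ke$, where $[x,y]=-2ke=y$; similarly $r_2=x\wedge y$ with $x=-\tfrac12 h$, $y=2kf$; and $r_3=x\wedge y$ with $x=ae-\tfrac12 h$, $y=k(a^2e-f-ah)$, where a direct check gives $[x,y]=y$. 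This route trades the structural argument for a finite verification but relies on the earlier classification rather than establishing the result from scratch.
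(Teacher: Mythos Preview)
Your argument is correct. The paper itself does not prove this theorem---it is quoted from Belavin--Drinfeld---so for the forward direction there is nothing to compare against; your identity $[r_{12},r_{13}]+[r_{12},r_{23}]+[r_{13},r_{23}]=x\wedge y\wedge[x,y]$ is a clean self-contained proof that the paper simply does not supply. For the ``furthermore'' clause, the paper effectively follows the route you label \emph{alternative}: having obtained, independently of \cite{BD}, all nonzero skew-symmetric solutions $r_1,r_2,r_3$ via the Semenov-Tian-Shansky correspondence and the Rota--Baxter classification of Theorem~\ref{thm:rbo}, it then matches these against the Belavin--Drinfeld parameterization. The paper's matching runs in the opposite direction from yours---it starts from an arbitrary pair $x,y\in\speciall$ with $[x,y]=y$, solves the resulting linear system in the coordinates of $y$, and shows that $x\otimes y-y\otimes x$ always lands in $\{r_1,r_2,r_3\}$---but the content is the same finite verification. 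Your primary structural argument (decomposability of every element of $\wedge^2\speciall$, reduction of CYBE to the condition that $\operatorname{span}\{x,y\}$ be a subalgebra, and the absence of two-dimensional abelian subalgebras in $\speciall$) is genuinely different and more conceptual: it establishes the result from scratch without invoking the Rota--Baxter classification, whereas the paper's verification leans on that classification as its source of completeness.
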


Of course, the skew-symmetric solutions of CYBE in $\speciall$ given by the three matrices in Theorem~\mref{thm:cybe} and those given by Belevin and Drinfeld in terms of 2-dimensional non-abelian subalgebras of $\speciall$ should coincide. To be precise, we next give an explicit 1-1 correspondence between the skew-symmetric solutions of CYBE in $\speciall$ given by the three matrices in Theorem~\mref{thm:cybe} and those given by Belevin and Drinfeld in terms of 2-dimensional non-abelian subalgebras of $\speciall$.

Let ~$x = x_{1}e + x_{2}f +x_{3}h$ and $y = y_{1} e + y_{2} f + y_{3} h$ in $\speciall$ be such that $[x,y] =y$. Then we have
\begin{equation}\mlabel{eq:solution}
\left\{\begin{array}{l} (2 x_{3} -1) y_{1} - 2 x_{1} y_{3} =0 \\
2 x_{2} y_{3} - (2 x_{3} + 1) y_{2} =0 \\
x_{1} y_{2} - x_{2} y_{1} - y_{3} =0, \end{array}\right.
\end{equation}
regarded as a linear system with variables $y_1, y_2, y_3$.
Since the determinant of the coefficient matrix is
$$
\left|\begin{array}{ccc} (2x_{3}-1)&0&-2x_{1}\\0&-(2x_{3}+1)&2x_{2}\\-x_{2}&x_{1}&-1 \end{array}\right| =
(2x_{3}-1)(2x_{3}+1) + 4x_{1}x_{2},
$$
the system of Eqs. (\ref{eq:solution}) has non-zero solutions if and if only $(2x_{3}-1)(2x_{3}+1) + 4x_{1}x_{2} =0$.

\noindent
{\bf Case I: Assume that $x_{1} =0$.} Then $x_{3}$ should be $\displaystyle \pm \frac{1}{2}$. If $x_{3} =\displaystyle \frac{1}{2}$, we have $y_{2} = -x_{2}^{2} y_{1}, y_{3} = - x_{2} y_{1}$.
Thus $\displaystyle x = x_{2}f + \frac{h}{2}$ and $y = y_{1}e -x_{2}^{2} y_{1} f  -x_{2} y_{1} h.$ We obtain
$$
\displaystyle x \otimes y - y \otimes x = y_{1}
\big( x_{2} (f \otimes e - e \otimes f)
+ \frac{1}{2} (h \otimes e - e \otimes h) + \frac{x^{2}_{2}}{2}
 (h \otimes f - f \otimes h) \big).
$$
If $x_2\neq 0$, by taking $k=y_{1}$ and $a =x_{2}$ in the above equation, we obtain $r_3$. If $x_{2} = 0$, we obtain $r_{1}$
by taking $k=\displaystyle -\frac{1}{2}y_1$,

If $x_{3} = \displaystyle -\frac{1}{2}$,
we have $y_{1} = y_{3} =0$. The non-zero solutions of Eq.~(\mref{eq:solution}) give
$\displaystyle x = x_{2} f - \frac{1}{2} h$ and $ y =  y_{2}f.$ This gives
\begin{eqnarray*}
x \otimes y - y \otimes x = \frac{1}{2}y_{2} (f \otimes h - h \otimes f) =
r_{2}
\end{eqnarray*}
by taking $k=\displaystyle \frac{1}{2}y_2$.
\smallskip

\noindent
{\bf Case II: Assume that $x_{1} \neq 0$.} We have $x_{2} = \displaystyle \frac{1-4x_{3}^{2}}{4x_{1}}$. Then
$$
\displaystyle x= x_{1} e + \frac{1-4x_{3}^{2}}{4x_{1}} f  + x_{3} h,\quad y = y_{1} \Big( e -\frac{(2x_{3}-1)^{2}}{4x_{1}^{2}} f  +
  \frac{2x_{3}-1}{2x_{1}} h \Big).
$$
Thus we obtain
$$ x \otimes y - y \otimes x =
 y_1\left(\frac{2x_{3}-1}{2x_{1}}(f \otimes e - e \otimes f) +
 \frac{1}{2} (h \otimes e - e \otimes h) +
 \frac{(2x_{3}-1)^{2}}{8 x_{1}^{2}} (h \otimes f - f \otimes h) \right).
$$
If $x_3\neq \displaystyle \frac{1}{2}$, then
taking $k=y_1$ and $a =\displaystyle  \frac{2x_{3}-1}{2x_{1}}$ in the above equation we obtain $r_3$.
If $x_{3} = \displaystyle \frac{1}{2}$, then the above equation becomes $r_1$ by taking $k=\displaystyle -\frac{1}{2}y_{1}$.

\subsection{Solutions of CYBE in $\speciall \ltimes_{{\rm ad}^{\ast}} \speciall^{\ast}$}
\label{subsec:so}

Let $\rho: \frakg \rightarrow gl(V)$ be a representation of a Lie algebra $\frakg$.
On the vector space $\frakg \oplus V$,
there is a natural Lie algebra structure
(denoted by $\frakg \ltimes_{\rho} V$ ) given by
\begin{equation}
[x_{1} + v_{1}, x_{2} + v_{2}] = [x_{1}, x_{2}] + \rho(x_{1})v_{2} -
\rho(x_{2})v_{1}, x_{1}, x_{2} \in \frakg, v_{1}, v_{2} \in V.
\end{equation}
Let $\rho^{\ast}: \frakg \rightarrow gl(V^{\ast})$
 be the dual representation of the representation
$\rho : \frakg \rightarrow gl(V)$ of the Lie algebra $\frakg$.

A linear map $P : V \rightarrow \mathfrak{g}$
can be identified as an element in $\mathfrak{g}\otimes V^{\ast}
\subset (\mathfrak{g} \ltimes_{\rho^{\ast}} V^{\ast} ) \otimes
(\mathfrak{g} \ltimes_{\rho^{\ast}} V^{\ast} )$ as follows.
Let $\{e_{1},\cdots , e_{n}\}$ be a basis of $\mathfrak{g}$.
Let $\{v_{1}, \cdots , v_{m}\}$ be a basis of $V$ and
$\{v^{\ast}_{1},\cdots, v^{\ast}_{m}\}$ be its dual basis,
that is $v^{\ast}_{i} (v_{j}) = \delta_{ij}$.
Set $P(v_{i}) =\sum\limits_{j=1}^{n} a_{ij}e_{j},$ $i=1,2,\cdots, n$.
Since as vector spaces, ${\rm Hom}(V, \mathfrak{g}) \cong \mathfrak{g}
\otimes V^{\ast}$, we have
\begin{equation}
P = \sum_{i=1}^{m} P(v_{i}) \otimes v_{i}^{\ast} =
\sum_{i=1}^{m} \sum_{j=1}^{n} a_{ij} e_{j} \otimes v_{i}^{\ast}
\in \mathfrak{g} \otimes V^{\ast} \subset
(\mathfrak{g} \ltimes_{\rho^{\ast}} V^{\ast} ) \otimes
(\mathfrak{g} \ltimes_{\rho^{\ast}} V^{\ast} ).
\end{equation}
For any tensor element $r=\sum_i a_i\otimes b_i\in V \otimes V$, denote $r^{21}=\sum_ib_i\otimes a_i$.

\begin{lemma}\mlabel{lem:op}\mcite{Bai0} Let $\frak g$ be a Lie algebra. A linear map $P: \frak g\rightarrow \frak g$ is a Rota-Baxter operator if and only if
$r = P - P^{21}$ is a skew-symmetric solution of CYBE
in $\mathfrak{g} \ltimes_{{\rm ad}^{\ast}} \frak g^{\ast}$.
\end{lemma}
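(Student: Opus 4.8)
The plan is to verify the tensor identity directly, reducing the CYBE to the Rota-Baxter relation by bookkeeping of the dual tensor slots. Writing $P(e_i)=\sum_j a_{ij}e_j$, the element $r=P-P^{21}=\sum_i\big(P(e_i)\otimes e_i^{\ast}-e_i^{\ast}\otimes P(e_i)\big)$ is manifestly skew-symmetric, so that half of the statement is free and the task becomes showing that the CYBE expression $C(r):=[r_{12},r_{13}]+[r_{12},r_{23}]+[r_{13},r_{23}]$ vanishes if and only if $P$ satisfies (\ref{eq:rotaeq1}). Throughout I would use the three standard formulas $[r_{12},r_{13}]=\sum_{\alpha,\beta}[x_\alpha,x_\beta]\otimes y_\alpha\otimes y_\beta$, $[r_{12},r_{23}]=\sum_{\alpha,\beta} x_\alpha\otimes[y_\alpha,x_\beta]\otimes y_\beta$, $[r_{13},r_{23}]=\sum_{\alpha,\beta} x_\alpha\otimes x_\beta\otimes[y_\alpha,y_\beta]$ for $r=\sum_\alpha x_\alpha\otimes y_\alpha$, together with the defining relations of $\mathfrak g\ltimes_{\mathrm{ad}^{\ast}}\mathfrak g^{\ast}$: the bracket of two elements of $\mathfrak g^{\ast}$ is zero, and $[x,\xi]=\mathrm{ad}^{\ast}(x)\xi$ with $\langle \mathrm{ad}^{\ast}(x)\xi,y\rangle=-\langle\xi,[x,y]\rangle$ for $x,y\in\mathfrak g$ and $\xi\in\mathfrak g^{\ast}$.

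The key structural observation is a grading argument. Each monomial of $r$ contains exactly one factor in $\mathfrak g^{\ast}$; since $[\mathfrak g^{\ast},\mathfrak g^{\ast}]=0$ and $[\mathfrak g,\mathfrak g^{\ast}]\subseteq\mathfrak g^{\ast}$, every nonzero bracket preserves the total number of $\mathfrak g^{\ast}$-factors. Hence all surviving monomials of $C(r)$ carry exactly two $\mathfrak g^{\ast}$-factors, so that $C(r)$ lies in the sum of the three homogeneous pieces $\mathfrak g\otimes\mathfrak g^{\ast}\otimes\mathfrak g^{\ast}$, $\mathfrak g^{\ast}\otimes\mathfrak g\otimes\mathfrak g^{\ast}$, and $\mathfrak g^{\ast}\otimes\mathfrak g^{\ast}\otimes\mathfrak g$. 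Moreover, it is a standard fact that for skew-symmetric $r$ the element $C(r)$ is totally antisymmetric, i.e. lies in $\wedge^{3}(\mathfrak g\ltimes_{\mathrm{ad}^{\ast}}\mathfrak g^{\ast})$; the permutation action of $S_3$ permutes these three pieces (up to sign), so they vanish simultaneously. It therefore suffices to compute a single piece, say the component $T\in\mathfrak g\otimes\mathfrak g^{\ast}\otimes\mathfrak g^{\ast}$.

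To extract $T$ I would collect exactly those monomials whose first slot lands in $\mathfrak g$ and whose remaining two slots land in $\mathfrak g^{\ast}$, obtaining three contributions: the $\mathfrak g$-valued bracket $[P(e_i),P(e_j)]$ from $[r_{12},r_{13}]$, and two coadjoint terms from $[r_{12},r_{23}]$ and $[r_{13},r_{23}]$. Pairing the two $\mathfrak g^{\ast}$-slots against arbitrary $x,y\in\mathfrak g$ via $\mathrm{id}\otimes\langle\,\cdot\,,x\rangle\otimes\langle\,\cdot\,,y\rangle$ and using $\langle \mathrm{ad}^{\ast}(u)\xi,z\rangle=-\langle\xi,[u,z]\rangle$ to convert the coadjoint terms, a direct computation will show the three contributions collapse to
$$[P(x),P(y)]-P([P(x),y])-P([x,P(y)]).$$
Since $T$ vanishes if and only if this expression vanishes for all $x,y\in\mathfrak g$, we conclude that $C(r)=0$ is equivalent to the Rota-Baxter relation (\ref{eq:rotaeq1}), which is exactly the claim.

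The bookkeeping in the last step is where I expect the real work to lie: one must keep careful track of the signs coming from the two families of terms $P(e_i)\otimes e_i^{\ast}$ and $-e_i^{\ast}\otimes P(e_i)$ in $r$, of the antisymmetry $[\xi,u]=-[u,\xi]$ when reordering a $\mathfrak g^{\ast}$-factor past a $\mathfrak g$-factor, and of the sign in the defining identity for $\mathrm{ad}^{\ast}$. The only conceptual input beyond direct computation is the total antisymmetry of $C(r)$ for skew $r$, which is what lets me reduce what would otherwise be several separate scalar identities to the single Rota-Baxter relation rather than checking each homogeneous component by hand.
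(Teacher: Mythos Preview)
The paper does not prove this lemma at all; it is simply quoted from \mcite{Bai0}. Your direct verification is correct and is essentially the standard argument: the grading observation confines $C(r)$ to the three homogeneous pieces carrying exactly two $\mathfrak g^{\ast}$-factors, the total antisymmetry of $C(r)$ for skew $r$ reduces the check to a single piece, and pairing the $\mathfrak g\otimes\mathfrak g^{\ast}\otimes\mathfrak g^{\ast}$ component against $x,y\in\mathfrak g$ does indeed collapse to the Rota-Baxter defect $[P(x),P(y)]-P([P(x),y])-P([x,P(y)])$. The sign bookkeeping you flag is routine once the convention $\langle\mathrm{ad}^{\ast}(u)\xi,z\rangle=-\langle\xi,[u,z]\rangle$ is fixed, so the proposal would yield a complete proof.
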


{\bf Notation:} For an algebra $A$ equipped with a bilinear product,
its {\bf (formal) characteristic matrix} is defined by
\begin{equation}
M =\left(\begin{array}{ccc} \sum_{k=1}^{n}a_{11}^{k}e_{k} &\cdots&  \sum_{k=1}^{n}a_{1n}^{k}e_{k} \\
\cdots &\cdots& \cdots \\  \sum_{k=1}^{n}a_{n1}^{k}e_{k} &\cdots &  \sum_{k=1}^{n}a_{nn}^{k}e_{k}\end{array} \right),
\end{equation}
where $\{e_{1}, e_{2}, \cdots, e_{n}\}$ is a basis of $A$ and $e_{i}e_{j}= \sum_{k=1}^{n} a^{k}_{ij}e_{k}$.

Let ${\rm ad}^{\ast}$ be the coadjoint representation of $\speciall$. Then the characteristic matrix of the
6-dimensional Lie algebra $\speciall \ltimes_{{\rm ad}^{\ast}} \speciall^{\ast}$ with respect to the basis $\{e,f,h,e^{\ast},f^{\ast},h^{\ast}\}$ is
$$
\left( \begin{array}{cccccc}
 0 & h & -2e &2h^{\ast}&0&-f^{\ast}\\
 -h & 0 & 2f &0&-2h^{\ast}&e^{\ast}\\
 2e & -2f & 0&-2e^{\ast}&2f^{\ast}&0\\
-2h^{\ast}&0&2e^{\ast}&0&0&0\\
0&2h^{\ast}&-2f^{\ast}&0&0&0\\
f^{\ast}&-e^{\ast}&0&0&0&0
\end{array} \right).
$$

By Lemma \mref{lem:op}, we can obtain the following skew-symmetric solutions of CYBE in $\speciall \ltimes_{{\rm ad}^{\ast}} \speciall^{\ast}$
by the Rota-Baxter operators on $\speciall$ given in Theorem~\mref{thm:rbo}.

\threech {$r_{1} = h \otimes f^{\ast} - f^{\ast} \otimes h $,}{$r_{2}=0$,}
{$r_{3} =  f \otimes e^{\ast} - e^{\ast} \otimes f$,}

\twoch {$r_{4} = h \otimes h^{\ast} - h^{\ast} \otimes h$,}{$r_{5} = e \otimes f^{\ast} - f^{\ast} \otimes e$,}

\onechss{$r_{6} = (e + ah) \otimes f^{\ast} - f^{\ast} \otimes (e+ah)$,}

\onech {$r_{7} = (e+af) \otimes e^{\ast} + (\frac{1}{a}e + f) \otimes f^{\ast} - e^{\ast} \otimes (e+af) - f^{\ast} \otimes (\frac{1}{a}e + f)$,}

\onechs {$r_{8} = (e+\frac{a^{2}}{16} f) \otimes e^{\ast} + (\frac{16}{a^{2}} e -3f - \frac{8}{a}h) \otimes f^{\ast} +(af+2h) \otimes h^{\ast}
-e^{\ast} \otimes (e+\frac{a^{2}}{16} f)$ ${\ } \qquad \quad -f^{\ast} \otimes (\frac{16}{a^{2}} e -3f - \frac{8}{a}h) - h^{\ast} \otimes (af+2h)$,}

\twochs {$r_{9} = h \otimes e^{\ast} - e^{\ast} \otimes h$,}{$r_{10} = h \otimes e^{\ast} - 2f \otimes h^{\ast} - e^{\ast} \otimes h + 2h^{\ast} \otimes f$,}

\begin{enumerate}
\item[(11)] $r_{11} = (f+ah) \otimes e^{\ast} - e^{\ast} \otimes (f+ah)$,
\item[(12)] $r_{12} = (f+ah) \otimes e^{\ast} - 2af\otimes h^{\ast} - e^{\ast} \otimes (f+ah) + 2ah^{\ast} \otimes f$,
\item[(13)] $r_{13} = (f+ah) \otimes e^{\ast} + (2af+2a^{2}h) \otimes h^{\ast} - e^{\ast} \otimes (f+ah) - h^{\ast} \otimes (2af+2a^{2}h)$,
\item[(14)] $r_{14} = (f+ah) \otimes e^{\ast} -(4a^{2}f + 4a^{3}h) \otimes f^{\ast} +(4af+4a^{2}h) \otimes h^{\ast} -e^{\ast} \otimes (f+ah)\\
    {\ }\qquad +
f^{\ast} \otimes (4a^{2}f + 4a^{3}h) - h^{\ast} \otimes (4af+4a^{2}h)$,
\item[(15)] $r_{15} = h\otimes f^{\ast} -2e \otimes h^{\ast} - f^{\ast} \otimes h + 2h^{\ast} \otimes e$,
\item[(16)] $r_{16} = (e+ah) \otimes f^{\ast} -2ae \otimes h^{\ast} - f^{\ast} \otimes (e+ah) + 2ah^{\ast} \otimes e$,
\item[(17)] $r_{17} = (e+ah) \otimes f^{\ast} +(2ae + 2a^{2}h) \otimes h^{\ast} - f^{\ast} \otimes (e+ah) - h^{\ast} \otimes (2ae + 2a^{2}h),$
\item[(18)] $r_{18} = -(4a^{2}e  + 4a^{3} h) \otimes e^{\ast} + (e+ah) \otimes f^{\ast} + (4ae+4a^{2}h) \otimes h^{\ast} +
e^{\ast} \otimes (4a^{2}e  + 4a^{3} h)\\
{\ }\qquad -f^{\ast} \otimes (e+ah) - h^{\ast} \otimes (4ae+4a^{2}h) $,
\item[(19)] $r_{19} = (e - \frac{3a^{2}}{4}f  +ah) \otimes e^{\ast} - (\frac{4}{27a^{2}}e + \frac{1}{3}f ) \otimes f^{\ast}
- (\frac{8}{9a} e + \frac{2}{3}h) \otimes h^{\ast}   - e^{\ast} \otimes  (e - \frac{3a^{2}}{4}f  +ah)\\
{\ }\qquad + f^{\ast} \otimes (\frac{4}{27a^{2}}e + \frac{1}{3}f )
+ h^{\ast} \otimes (\frac{8}{9a} e + \frac{2}{3}h)$,
\item[(20)] $r_{20} = (ae - \frac{a^{2}}{2} h) \otimes e^{\ast} - (af + \frac{1}{2}h) \otimes f^{\ast} + (e+a^{2}f) \otimes h^{\ast}
-e^{\ast} \otimes (ae - \frac{a^{2}}{2} h) + f^{\ast} \otimes (af + \frac{1}{2}h) - h^{\ast} \otimes  (e+a^{2}f) $,
\item[(21)] $r_{21} = (ae + 4a^{3} f) \otimes e^{\ast} - (\frac{1}{4a}e +af) \otimes f^{\ast}+ (e+ 4a^{2}f) \otimes h^{\ast}
-e^{\ast} \otimes (ae + 4a^{3} f)$ \\ ${\ }\qquad +f^{\ast} \otimes (\frac{1}{4a}e +af) -h^{\ast} \otimes  (e+ 4a^{2}f)$,
\item[(22)] $r_{22} = -(\frac{1}{4b}e -af+\frac{1+16ab^{3}}{16b^2}h)\otimes e^{\ast} +(be-4ab^{2}f+\frac{1+16ab^{3}}{4}h) \otimes f^{\ast} \\
    {\ }\qquad +(e-4abf+\frac{1+16ab^{3}}{4b}h) \otimes h^{\ast} + e^{\ast} \otimes (\frac{1}{4b}e -af+\frac{1+16ab^{3}}{16b^2}h)
    \\
    {\ } \qquad -
f^{\ast} \otimes (be-4ab^{2}f+\frac{1+16ab^{3}}{4}h) - h^{\ast} \otimes (e-4abf+\frac{1+16ab^{3}}{4b}h)$.
\end{enumerate}

\section{Induced pre-Lie algebras and solutions of CYBE in 6-dimensional Lie algebras}

A {\bf pre-Lie algebra} is a vector space with a bilinear product $\{ ,\}$
satisfying that for any $x, y, z \in A$,
\begin{equation}\mlabel{eq:prelie}
\{\{x,y\},z\} - \{x, \{y,z\}\} = \{ \{y,x\},z\} - \{y,\{x,z\}\}.
\end{equation}

\begin{lemma}\mcite{Agu}\mlabel{lem:pre}
Let $(\mathfrak{g},[\cdot, \cdot])$ be a Lie algebra and $P:\frak g\rightarrow \frak g$ be a Rota-Baxter operator. Define a new operation $\{x,y\} = [P(x),y]$. Then $(\mathfrak{g},\{\cdot, \cdot\})$ is a pre-Lie algebra.
\end{lemma}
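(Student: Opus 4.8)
The plan is to verify the defining identity~\eqref{eq:prelie} directly by introducing the associator-type expression
$$
(x,y,z) := \{\{x,y\},z\} - \{x,\{y,z\}\},
$$
so that~\eqref{eq:prelie} is precisely the assertion that $(x,y,z) = (y,x,z)$ for all $x,y,z \in \mathfrak{g}$; that is, the left-symmetry of this expression in its first two arguments. Thus it suffices to show $(x,y,z)-(y,x,z)=0$.

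First I would substitute the definition $\{x,y\}=[P(x),y]$ and use the linearity of $P$ together with the bilinearity of the bracket to rewrite
$$
(x,y,z) = [P([P(x),y]),z] - [P(x),[P(y),z]].
$$
Subtracting the same expression with $x$ and $y$ interchanged, the two ``outer'' terms combine, again by linearity of $P$ and of the bracket in its first slot, into $[P([P(x),y]+[x,P(y)]),z]$, where I have used skew-symmetry to rewrite $-[P(y),x]=[x,P(y)]$.

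The crux of the argument is then the Rota-Baxter relation~\eqref{eq:rotaeq1}, which identifies exactly this combination, namely $P([P(x),y]+[x,P(y)]) = [P(x),P(y)]$. Hence the outer terms collapse to $[[P(x),P(y)],z]$, and we are left with
$$
(x,y,z)-(y,x,z) = [[P(x),P(y)],z] - [P(x),[P(y),z]] + [P(y),[P(x),z]].
$$
Finally I would invoke the Jacobi identity for $[\cdot,\cdot]$, which, after using skew-symmetry, states precisely that $[[P(x),P(y)],z] = [P(x),[P(y),z]] - [P(y),[P(x),z]]$; substituting this makes the right-hand side vanish identically, which completes the verification.

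I expect no serious obstacle: the argument is a short formal manipulation valid for any Lie algebra and any Rota-Baxter operator, with no additional hypotheses required. The only point demanding care is the bookkeeping of signs, both when applying skew-symmetry to merge the two outer terms and when writing the Jacobi identity in the left-symmetric form used above; provided these are handled correctly, the Rota-Baxter relation and the Jacobi identity together force the cancellation, establishing that $(\mathfrak{g},\{\cdot,\cdot\})$ is a pre-Lie algebra.
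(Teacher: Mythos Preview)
Your proof is correct. The paper does not supply its own proof of this lemma; it simply cites the result from Aguiar~\cite{Agu}. Your argument is exactly the standard direct verification: expand the associator $(x,y,z)$ using the definition $\{x,y\}=[P(x),y]$, combine the two ``outer'' terms via linearity and skew-symmetry so that the Rota-Baxter identity~\eqref{eq:rotaeq1} converts them into $[[P(x),P(y)],z]$, and then invoke the Jacobi identity in the form $[[a,b],c]=[a,[b,c]]-[b,[a,c]]$ to obtain the cancellation. The sign bookkeeping you flag is indeed the only delicate point, and you have handled it correctly.
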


Let $(A, \{,\})$ be a pre-Lie algebra.  Then the commutator
 (cf. \mcite{Bu})
\begin{equation}
[x, y] = \{x,y\} - \{y,x\},
\end{equation}
defines a Lie algebra $\frak g(A)$, which is called the
{\bf sub-adjacent Lie algebra} of $A$ and $A$ is also called a {\bf compatible pre-Lie algebra} structure on the Lie algebra
$\frak g(A)$. Furthermore, for any $x\in A$, let $L_x$ denote the left
multiplication operator, that is, $L_x(y)=\{x,y\}$ for any $y\in A$. Then
$L:\frak g(A)\rightarrow gl({\frak g}(A))$ with $x\rightarrow L_x$
gives a representation of the Lie algebra $\frak g(A)$,
that is,
\begin{equation}
[L_x,L_y]=L_{[x,y]},\;\;\forall x,y\in A.
\mlabel{eq:rep}
\end{equation}

By Lemma ~\mref{lem:pre}, the Rota-Baxter operators on $\speciall$ obtained in Theorem ~\mref{thm:rbo} can be used to give 3-dimensional pre-Lie algebras.
Of course, different Rota-Baxter operators on $\speciall$ might give isomorphic pre-Lie algebra structures. On the other hand, all complex 3-dimensional pre-Lie algebras have been classified in \mcite{Bai}. Thus it is interesting to determine the induced 3-dimensional pre-Lie algebras in the sense of Lemma~\mref{lem:pre} up to (algebraic) isomorphisms. For this purpose we adapt the notations from~\mcite{Bai}.

\begin{theorem} \label{thm:induced} With the notations in Theorem~\ref{thm:rbo},
let $PL_{P_{i}}$ denote the pre-Lie algebra constructed by $P_{i}$
in the sense of Lemma~\ref{lem:pre}, $1\leq i \leq 22$.
Then $PL_{P_i}$ give the following pre-Lie algebras:

\begin{table}[hbtp] \begin{center}\begin{tabular}{|c|c|}
\hline
 $PL_{P_i}, 1\leq i\leq 22$ & characteristic matrices \\ \hline\hline
$PL_{P_2}$ & Trivial \\ \hline
$PL_{P_1},PL_{P_6},PL_{P_9},PL_{P_{11}}, PL_{P_{14}},PL_{P_{18}},PL_{P_{21}}, PL_{P_{22}}$ ($16ab^3\ne 1$) & $($N-1$)_{-1}$\\ \hline
$PL_{P_3}, PL_{P_5}, PL_{P_{22}}$ ($16ab^3=1$)&$($H-6$)$\\ \hline
$PL_{P_4}, PL_{P_7}, PL_{P_{13}}, PL_{P_{17}}$&$(D_{-1}$-1$)$ \\ \hline
$PL_{P_8}, PL_{P_{12}}, PL_{P_{16}}, PL_{P_{19}}$ &$($E-6$)$\\ \hline
$PL_{P_{10}}, PL_{P_{15}}, PL_{P_{20}}$ &$(\overline{D}_1$-8$)$\\ \hline
\end{tabular}
\end{center}
\end{table}
Here the entries in the right column are the following (formal) characteristic matrices of the pre-Lie algebras in~\mcite{Bai}:
\begin{eqnarray*}
&(\text{N-1})_{-1} := \left( \begin{array}{ccc}
 0 & 0 & 0 \\
 0 & 0 & 0 \\
 0 & e_{2} & - e_{3}
\end{array} \right), \quad (\text{H-6}) := \left( \begin{array}{ccc}
 0 & 0 & 0 \\
-e_{3} & e_{1} & 0 \\
 0 & 0 & 0
\end{array} \right), \quad
(D_{-1}\text{-1})_{0} := \left( \begin{array}{ccc}
0 & 0 & 0 \\
0 & 0 & 0 \\
e_{1}  & -e_{2} & 0
\end{array} \right),&\\
&(E\text{-}6) := \left( \begin{array}{ccc}
 0 & 0 & 0 \\
 0 & e_{1} & -e_{1} - e_{2} \\
  e_{1} & 0 & -e_{2}-e_{3}
\end{array} \right),\quad (\overline{D}_{1}\text{-}8) := \left( \begin{array}{ccc}
 0 & 0 & 0 \\
0 & e_{1} & -e_{2} \\
 e_{1} & 0 & -e_{3}
\end{array} \right).&
\end{eqnarray*}
\end{theorem}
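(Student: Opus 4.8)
The plan is to prove Theorem~\ref{thm:induced} by a two-stage computation: first produce the pre-Lie structure attached to each $P_i$, then match it against the classification of 3-dimensional complex pre-Lie algebras in~\cite{Bai}. By Lemma~\ref{lem:pre}, each Rota-Baxter operator $P_i$ of Theorem~\ref{thm:rbo} defines a pre-Lie product $\{x,y\}=[P_i(x),y]$ on $\speciall$. First I would compute this product on the basis $\{e,f,h\}$ for every $i$. Writing $P_i(e)=r_{11}e+r_{12}f+r_{13}h$ and similarly for $P_i(f),P_i(h)$, each product $\{x,y\}=[P_i(x),y]$ expands through the structure relations $[h,e]=2e,\ [h,f]=-2f,\ [e,f]=h$ into an explicit linear combination of $e,f,h$. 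Recording the nine products yields the (formal) characteristic matrix $M_i$ of $PL_{P_i}$ relative to $\{e,f,h\}$.

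The second step is to recognize each $M_i$ as one of the normal forms $(\text{N-1})_{-1},(\text{H-6}),(D_{-1}\text{-}1)_0,(E\text{-}6),(\overline{D}_{1}\text{-}8)$ (or the trivial algebra). Since the list in~\cite{Bai} is exhaustive, it suffices to exhibit for each $i$ an invertible linear change of basis $\phi_i$ on $\speciall$ carrying the computed product into the claimed normal form; equivalently, to write down a new basis $\{e_1,e_2,e_3\}$, as explicit combinations of $e,f,h$, in which $\{\cdot,\cdot\}$ takes the target shape. Performing this basis change row by row groups the 22 operators exactly as in the table, collapsing several a priori distinct operators into a common class.

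The hard part will be this matching in Step 2, rather than the bracket computations, which are routine though numerous. In particular I expect the parameter-dependent behaviour of $PL_{P_{22}}$ to require care: the quantity $16ab^3$ controls a degeneration, so the isomorphism type jumps from $(\text{N-1})_{-1}$ when $16ab^3\ne 1$ to $(\text{H-6})$ when $16ab^3=1$, and the two cases must be treated by separate basis changes. To confirm that operators placed in the same row really are isomorphic one produces the explicit $\phi_i$; to confirm that the six classes are genuinely distinct one can rely on the completeness of~\cite{Bai} together with isomorphism invariants—such as the dimension of the span of all products $\{x,y\}$, the isomorphism type of the sub-adjacent Lie algebra $\mathfrak{g}(A)$, and whether the left-multiplication operators $L_x$ are simultaneously nilpotent—which already separate the five nontrivial normal forms.
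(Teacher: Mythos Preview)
Your proposal is correct and matches the paper's own proof: the paper proceeds exactly by exhibiting, for each $P_i$, an explicit basis $e_1,e_2,e_3$ (given as linear combinations of $e,f,h$) in which the pre-Lie product $\{x,y\}=[P_i(x),y]$ has the claimed characteristic matrix, with the $P_{22}$ case split according to whether $16ab^3=1$. The only minor difference is that the paper omits your final remark about invariants distinguishing the five normal forms, relying instead entirely on the classification in~\cite{Bai}.
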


\begin{proof}
It follows from direct checking that the characteristic matrix of $PL_{P_i}$ for each $1\leq i\leq 22$ is the designated matrix under the following basis $e_1:=e_{i,1}, e_2:=e_{i,2}, e_3:=e_{i,3}$:

\begin{center}\begin{tabular}{|c|c|}
\hline
 $i$ & $e_1,~ e_2,~ e_3$ \\ \hline\hline
1 & $h,~ e, ~ \frac{1}{2}f$\\ \hline
2 &  $ e,~ f,~ h$ \\ \hline
3 & $-h, ~ e, ~ 2f$\\ \hline
4 & $e, ~ f, ~ \frac{1}{2} h$ \\ \hline
5 & $h, ~ f, ~ 2e$ \\ \hline
6 & $e +ah, ~ \frac{1}{4a^{3}} e, ~ - \frac{1}{4a^{2}} (h-2af) - \frac{1}{8a^{3}}e$ \\ \hline
7 &  $e-af+\sqrt{a}h, ~ e-af-\sqrt{a}h, ~  -\frac{1}{4\sqrt{a}} (e+af)$ \\ \hline
8 & $\frac{a}{16}(f + \frac{4}{a}h - \frac{16}{a^{2}}e),
~ \frac{1}{4} h - \frac{2}{a} e, \frac{4}{a}e - \frac{1}{4}h$ \\ \hline
9 & $h, ~ f, ~  - \frac{1}{2}e$\\ \hline
10 & $ -f, ~\frac{1}{2}h, ~ -\frac{1}{2}e$ \\ \hline
11 & $f +ah, ~ -\frac{1}{4a^{3}} f, ~  \frac{1}{4a^{2}} (h-2ae) + \frac{1}{8a^{3}} f$ \\ \hline
12 & $-\frac{1}{4a^{3}} f , ~ \frac{1}{4a^{2}} h, ~ - \frac{1}{4a^{2}}h - \frac{1}{2a} e$ \\ \hline
13 & $-\frac{1}{4a^{3}} f, ~ \frac{1}{2a} e - \frac{1}{4a^{2}}h -\frac{1}{8a^{3}} f,
~-\frac{1}{4a^{2}}h -\frac{1}{4a^{3}} f$ \\ \hline
14 & $h +\frac{1}{a} f , ~ -\frac{1}{2a} e + \frac{1}{4a^{2}} h + \frac{1}{8a^{3}} f,
~ -\frac{1}{8a^{3}}f$ \\ \hline
15 & $e, ~\frac{1}{2} h, ~ \frac{1}{2}f$ \\ \hline
16 & $\frac{1}{4a^{3}}e, ~  -\frac{1}{4a^{2}}h,
~  \frac{1}{4a^{2}}h+\frac{1}{2a}f$ \\ \hline
17 & $ \frac{1}{4a^{3}}e, ~ \frac{1}{2a}f - \frac{1}{4a^{2}}h - \frac{1}{8a^{3}}e,
 ~ -\frac{1}{4a^{2}}h - \frac{1}{4a^{3}}e$ \\ \hline
18 & $ h +\frac{1}{a}e, ~
e_{2} = f-\frac{1}{2a}h- \frac{1}{4a^{2}}e,
~ e_{3} = \frac{1}{8a^{3}}e$ \\ \hline
19 & $  \frac{1}{2a}(e - \frac{9}{4}a^{2}f + \frac{3}{2}ah),
~ \frac{3}{4}(h-3af), ~ \frac{3}{4}(6af - h)$ \\ \hline
20 & $  -e+a^{2} f +ah ,
~ \frac{1}{\sqrt{2}} (h +2af), ~  -f$ \\ \hline
21 & $  e+4a^{2}f,
~  \frac{1}{4a} h + \frac{1}{2}f - \frac{1}{8a^{2}} e,
~ \frac{1}{8a} h - \frac{1}{4} f + \frac{1}{16a^{2}}e$ \\ \hline
22 ($16ab^{3} = 1$) & $ h-\frac{1}{b}f,
~ \frac{1}{\sqrt{b}}f,
~ \frac{2}{\sqrt{b}} (e + \frac{1}{2b} h - \frac{1}{4b^{2}} f)$ \\ \hline
22 ($16ab^{3} \neq 1$) & $ e -4abf + \frac{1+16ab^{3}}{4b} h,
~h-\frac{1}{b} f - \frac{4b}{16ab^{3}-1} e_{1},
~\frac{32ab^{3}}{(16ab^{3}-1)^{2}}f -
\frac{2b}{(16ab^{3}-1)^{2}} h - \frac{8b^{2}}{(16ab^{3}-1)^{3}} e_{1}$ \\ \hline
\end{tabular}
\end{center}
\end{proof}

Theorem~\mref{thm:induced} has a direct consequence on the construction of solutions of CYBE due to the following result:
\begin{prop}(\cite{Bai}) Let $A$ be a pre-Lie algebra. Then
\begin{equation}
 r = \sum_{i=1}^n (e_i\otimes e_i^* - e_i^* \otimes e_i)\end{equation}
is a solution of the classical Yang-Baxter equation in the Lie algebra ${\frak g} (A)\ltimes_{L^*}{\frak g} (A)^*$, where
$\{e_1,\cdots,e_n\}$ is a basis of $A$ and $\{e_1^*, \cdots, e_n^*\}$ is the dual basis, $L^*$ is the dual representation of the representation $L:{\frak g}\to gl({\frak g}(A))$ in Eq.~$($\mref{eq:rep}$)$.
\end{prop}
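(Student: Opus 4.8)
The plan is to recognize $r$ as the skew-symmetrization of the canonical tensor representing the identity operator, and to reduce the statement to the $\mathcal{O}$-operator (relative Rota-Baxter operator) version of Lemma~\mref{lem:op}. Concretely, under the identification $\mathrm{Hom}(\mathfrak{g}(A),\mathfrak{g}(A)) \cong \mathfrak{g}(A) \otimes \mathfrak{g}(A)^{*}$ spelled out in the paragraph preceding Lemma~\mref{lem:op}, the identity map $\mathrm{id}:\mathfrak{g}(A)\to\mathfrak{g}(A)$ corresponds to $\sum_{i} e_i\otimes e_i^{*}$. Hence, writing $T^{21}$ for the flipped tensor as before,
$$r = \sum_{i=1}^{n}(e_i\otimes e_i^{*} - e_i^{*}\otimes e_i) = \mathrm{id} - \mathrm{id}^{21},$$
so it suffices to show that $\mathrm{id} - \mathrm{id}^{21}$ is a skew-symmetric solution of CYBE in $\mathfrak{g}(A)\ltimes_{L^{*}}\mathfrak{g}(A)^{*}$.

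The key step is to observe that $\mathrm{id}$ is an $\mathcal{O}$-operator of the Lie algebra $\mathfrak{g}(A)$ with respect to the representation $(L,\mathfrak{g}(A))$ furnished by Eq.~(\mref{eq:rep}); that is, it satisfies
$$[\mathrm{id}(u),\mathrm{id}(v)] = \mathrm{id}\big(L_{\mathrm{id}(u)}v - L_{\mathrm{id}(v)}u\big),\quad u,v\in\mathfrak{g}(A).$$
This is tautological: the right-hand side equals $L_u v - L_v u = \{u,v\} - \{v,u\}$, which is precisely the sub-adjacent bracket $[u,v]$ by the very definition of $\mathfrak{g}(A)$. So no further use of the pre-Lie identity~(\mref{eq:prelie}) is needed here; the pre-Lie axiom is instead consumed in guaranteeing that $L$ is a genuine representation, which is exactly what makes the semidirect sum $\mathfrak{g}(A)\ltimes_{L^{*}}\mathfrak{g}(A)^{*}$ and the correspondence below available.

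Having identified $\mathrm{id}$ as an $\mathcal{O}$-operator, I would invoke the representation-theoretic generalization of Lemma~\mref{lem:op}: for any representation $(\rho,V)$ of a Lie algebra $\mathfrak{h}$, a linear map $T:V\to\mathfrak{h}$ is an $\mathcal{O}$-operator if and only if $r = T - T^{21}\in (\mathfrak{h}\ltimes_{\rho^{*}}V^{*})^{\otimes 2}$ is a skew-symmetric solution of CYBE in $\mathfrak{h}\ltimes_{\rho^{*}}V^{*}$. Applying this with $\mathfrak{h}=V=\mathfrak{g}(A)$, $\rho=L$, and $T=\mathrm{id}$ yields the claim, since $\mathrm{id}=\sum_i e_i\otimes e_i^{*}$ gives back $r=\mathrm{id}-\mathrm{id}^{21}$.

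The main obstacle is precisely this last ingredient: Lemma~\mref{lem:op} as stated in the excerpt covers only the coadjoint representation $\mathrm{ad}^{*}$, whereas here the relevant representation is $L^{*}$. One therefore either cites the general $\mathcal{O}$-operator correspondence (as in the source~\mcite{Bai}) or re-proves it directly by expanding $[r_{12},r_{13}]+[r_{12},r_{23}]+[r_{13},r_{23}]$ in $\mathfrak{h}\ltimes_{\rho^{*}}V^{*}$. In the direct route, the computation splits according to which tensor factors land in $\mathfrak{h}$ versus $V^{*}$: because $[V^{*},V^{*}]=0$ in the semidirect sum, many terms drop out, the purely $\mathfrak{h}$-valued contribution vanishes exactly by the $\mathcal{O}$-operator identity, and the mixed contributions cancel using the representation axiom $[\rho(x),\rho(y)]=\rho([x,y])$ together with the skew-symmetry of $r$. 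I expect the bookkeeping of these mixed terms — tracking the dual action $\rho^{*}$ and the placement of the flipped tensor $T^{21}$ — to be the only genuinely delicate part.
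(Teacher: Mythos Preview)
The paper does not prove this proposition at all: it is simply quoted from~\mcite{Bai} and used as a black box. So there is no ``paper's own proof'' to compare against.

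Your argument is correct and is in fact the standard one. The identity map is an $\mathcal{O}$-operator of $\mathfrak{g}(A)$ with respect to $(L,\mathfrak{g}(A))$ exactly because $[u,v]=\{u,v\}-\{v,u\}=L_u v-L_v u$, and the general version of Lemma~\mref{lem:op} (linear map $T:V\to\mathfrak{g}$ is an $\mathcal{O}$-operator associated to $(\rho,V)$ iff $T-T^{21}$ solves CYBE in $\mathfrak{g}\ltimes_{\rho^*}V^*$) is precisely what is proved in the same source~\mcite{Bai0} that the paper cites for Lemma~\mref{lem:op}. Your caveat about Lemma~\mref{lem:op} being stated only for $\mathrm{ad}^*$ is well taken, but the general form is available in that reference, so the ``obstacle'' you flag is not a real gap---you may simply cite~\mcite{Bai0} rather than rederive it.
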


Let $A$ run through the five pre-Lie algebra structures on $\speciall$ in Theorem~\ref{thm:induced}. Then we obtain five 6-dimensional Lie algebras ${\frak g} (A)\ltimes_{L^*}{\frak g} (A)^*$. Their characteristic matrices with respect to the bases $\{e_1,e_2,e_3\}$ given in Theorem~\ref{thm:induced} and their dual bases $\{e_1^*,e_2^*, e_3^*\}$ are

$$
\left( \begin{array}{cccccc}
 0 & 0 & 0 &0&0&0\\
 0 & 0 & -e_{2} &0& 0 &0\\
 0 &e_{2} & 0&0&e_{2}^{\ast}&-e_{3}^{\ast}\\
0&0&0&0&0&0\\
0&0&-e^{\ast}_{2}&0&0&0\\
0&0&e_{3}^{\ast}&0&0&0
\end{array} \right), \quad \left( \begin{array}{cccccc}
 0 & e_{3} & 0 &0&0&0\\
 -e_{3} & 0 & 0 &e_{2}^{\ast}& 0 &-e_{1}^{\ast}\\
 0 & 0 & 0&0& 0 & 0 \\
0&-e_{2}^{\ast}&0&0&0&0\\
0&0&0&0&0&0\\
0&e_{1}^{\ast}&0&0&0&0
\end{array} \right),
$$

$$
\left( \begin{array}{cccccc}
 0 & 0 & -e_{1} &0&0&0\\
 0 & 0 & e_{2} &0& 0 &0\\
 e_{1} &-e_{2} & 0&e_{1}^{\ast}&-e_{2}^{\ast}&0\\
0&0&-e_{1}^{\ast}&0&0&0\\
0&0&e_{2}^{\ast}&0&0&0\\
0&0&0&0&0&0
\end{array} \right),
\quad
\left( \begin{array}{cccccc}
 0 & 0 & -e_{1} &0&0&0\\
 0 & 0 & -e_{1}-e_{2} &e_{2}^{\ast}-e_{3}^{\ast}& -e_{2}^{\ast}&0\\
 e_{1} & e_{1}+e_{2} & 0&e_{1}^{\ast}& -e_{2}^{\ast} & -e_{3}^{\ast} \\
0&e_{3}^{\ast}-e_{2}^{\ast}&-e_{1}^{\ast}&0&0&0\\
0&e_{2}^{\ast}&e_{2}^{\ast}&0&0&0\\
0&0&e_{3}^{\ast}&0&0&0
\end{array} \right),
$$

$$
\left( \begin{array}{cccccc}
 0 & 0 & -e_{1} &0&0&0\\
 0 & 0 & -e_{2} &e_{2}^{\ast}& -e_{3}^{\ast} &0\\
 e_{1} &e_{2} & 0&e_{1}^{\ast}&0&-e_{3}^{\ast}\\
0&-e_{2}^{\ast}&-e_{1}^{\ast}&0&0&0\\
0&e_{3}^{\ast}&0&0&0&0\\
0&0&e_{3}^{\ast}&0&0&0
\end{array} \right).
$$

Moreover, there is a solution of CYBE with the following uniform form in the above Lie algebras:
\begin{equation}\label{eq:r}
 r = \sum_{i=1}^3 (e_i\otimes e_i^* - e_i^* \otimes e_i).\end{equation}

\section{Conclusions and discussion}
Based on the study in the previous sections, we give the following conclusions and discussion.

We have given all the Rota-Baxter operators on the 3-dimensional complex simple Lie algebra $\speciall$ under the Cartan-Weyl basis. It is known that the set of Rota-Baxter operators is only a ``set" with scalar multiplication, but without further known structures. So we determine the Rota-Baxter operators up to scalar multiplication. Our result is given in our particular choice of the Cartan-Weyl basis. We would like to point out that the classification of  Rota-Baxter operators on $\speciall$ can de done directly under the orthonormal basis in place of the Cartan-Weyl basis. However, it still needs a similar complicated computational process. On the other hand, it
would be interesting to consider suitable equivalence relations of these operators such that the resulting equivalence classes are independent of the choice of bases.
It is also interesting to consider the generalization of the study in this paper to other simple complex Lie algebras. Note that for these Lie algebras, the Cartan-Weyl basis can be explicitly expressed.

We summarize the three approaches that we take in this paper to derive solutions of CYBE from Rota-Baxter operators on  $\speciall$.
\begin{enumerate}
\item[(1)] In the skew-symmetric cases (under an orthonormal basis associated to the Killing form), every Rota-Baxter operators on  $\speciall$
corresponds to a skew-symmetric solution of CYBE in  $\speciall$. As expected from the work of Semenov-Tian-Shansky, they give all the skew-symmetric solutions of CYBE in $\speciall$. They also coincide with the classification result of Belavin and Drinfeld~\cite{BD} in terms of 2-dimensional non-abelian subalgebras of $\speciall$, for which we give some details.
\item[(2)] Every Rota-Baxter operator on  $\speciall$ gives a skew-symmetric solution of CYBE in the 6-dimensional Lie algebra $\speciall \ltimes_{{\rm ad}^{\ast}} \speciall^{\ast}$.
\item[(3)] Every Rota-Baxter operator on  $\speciall$ induces a 3-dimensional pre-Lie algebra. Many of the induced pre-Lie algebras are isomorphic and there are exactly 5 non-trivial induced pre-Lie algebras. For every induced pre-Lie algebra $A$, Eq.~(\ref{eq:r}) gives a skew-symmetric solution of CYBE in the 6-dimensional Lie algebra
${\frak g} (A)\ltimes_{L^*}{\frak g} (A)^*$. We remark that these Lie algebras are not isomorphic to $\speciall \ltimes_{{\rm ad}^{\ast}}
\speciall^{\ast}$ since the latter has a $\speciall$ subalgebra.
\end{enumerate}

It is also natural to consider the Lie bialgebra structures and their quantization related to the explicit construction of (skew-symmetric) solutions of CYBE in the Lie algebras with the same explicit structure constants as in this paper. In fact, for these solutions $r$, there is a Lie bialgebra structure constructed by
\begin{equation}
\delta(x)=[x\otimes 1+1\otimes x, r].
\end{equation}
The structure constants can be expressed explicitly and it is natural to consider whether there is a corresponding Drinfel'd quantum twist~\cite{CP,D}.

We finally mention that the study of Rota-Baxter operators on Lie algebras does not follow from that of Rota-Baxter operators on associative algebras by commutator since there there exist Lie algebras (such as semisimple Lie algebras) which are not the commutators of any associative algebras.

\smallskip

\noindent {\bf Acknowledgements: } C. Bai would like to thank the support by NSFC (11271202, 11221091) and SRFDP (20120031110022).
L. Guo acknowledges support from NSF grant DMS 1001855.


\begin{thebibliography}{abcdsfgh}
\bibitem{Agu} M. Aguiar, Pre-Poisson algebras, {\em Lett. Math. Phys.} {\bf 54} (2000) 263-277.

\bibitem{Bai0}  C. Bai, A unified algebraic approach to the classical Yang-Baxter equation, {\em J. Phy. A: Math. Theor.} {\bf 40} (2007) 11073-11082.

\bibitem{Bai} C. Bai, Bijective 1-cocycles and classification of 3-dimensional left-symmetric algebras, {\em Comm. Algebra}, {\bf 37} (2009) 1016¨C1057.

\bibitem{BGN1} C. Bai, L. Guo and X. Ni, Nonabelian generalized Lax pairs, the classical Yang-Baxter equation and PostLie algebras, {\em Comm. Math. Phys.} {\bf 297} (2010) 553-596.

\bibitem{BGN2} C. Bai, L. Guo and X. Ni, Generalizations of the classical Yang-Baxter equation and $\mathcal O$-operators, {\em J. Math. Phys.}  {\bf 52} (2011) 063515.

\bibitem{Belavin} A. A. Belavin, Dynamical symmetry of integrable quantum
systems, {\em Nucl. Phys. B} {\bf 180} (1981) 189-200.

\bibitem{BD} A. A. Belavin and V. G. Drinfel'd, Solutions of classical Yang-Baxter
equation for simple Lie algebras, {\em Funct. Anal. Appl.} {\bf 16} (1982) 159-180.

\bibitem{Baxter} G. Baxter, An analytic problem whose solution follows from a simple algebraic identity, {\em Pacific J. Math.} {\bf 10} (1960) 731-742.

\bibitem{Bu} D. Burde, Left-symmetric algebras and pre-Lie algebras in geometry and physics, {\em Cent. Eur. J. Math.} {\bf 4} (2006) 323-357.

\bibitem{CP} V. Chari and A. Pressley, A guide to quantum groups, Cambridge
University Press, Cambridge (1994).

\bibitem{CK1} A. Connes and D. Kreimer, Hopf
 algebras, renormalization and noncommutative geometry,
 {\em Comm. Math. Phys.} {\bf 199} (1998) 203-242.

\bibitem{CK2} A. Connes and D. Kreimer, Renormalization in quantum field
theory and the Riemann-Hilbert problem. I. The Hopf algebra
structure of graphs and the main theorem,  {\em Comm. Math. Phys.} {\bf 210} (2000) 249-273.

\bibitem{D} V. Drinfel'd, Hamiltonian structure on the Lie groups, Lie bialgebras and the geometric sense of the classical Yang-Baxter equations, {\em Soviet
Math. Dokl.} {\bf 27} (1983) 68-71.

\bibitem{FT1}  L. D. Faddeev and L. Takhtajan, The quantum inverse scattering method of the inverse problem and the Heisenberg XYZ
model, {\em Russ. Math. Surv.} {\bf 34} (1979) 11-68.

\bibitem{FT2} L. D. Faddeev and L. Takhtajan, Hamiltonian methods in the
theory of solitons, Springer, Berlin (1987).

\bibitem{Guo1} L. Guo, WHAT is a Rota-Baxter algebra, {\it Notice of
Amer. Math. Soc.} {\bf 56} (2009) 1436-1437.

\bibitem{Guo2} L. Guo, An introduction to Rota-Baxter algebra, International Press, 2012.

\bibitem{Rota} G.-C. Rota, Baxter operators, an introduction, In: ``Gian-Carlo Rota on Combinatorics, Introductory papers and
commentaries", Joseph P.S. Kung, Editor, Birkh\"auser, Boston, 1995.

\bibitem{STS} M.A. Semenov-Tian-Shansky,  What is a classical R-matrix? {\em Funct. Anal. Appl.} \textbf{17} (1983) 259-272.

\bibitem{St1} A. Stolin, On rational solutions of Yang-Baxter equation for $sl(n)$, {\em Math. Scand.} {\bf 69} (1991) 57-80.

\bibitem{St2} A. Stolin, Constant solutions of Yang-Baxter equation for $sl(2)$ and $sl(3)$, {\em Math. Scand.} {\bf 69} (1991) 81-88.

\end{thebibliography}
\end{document}